\documentclass[a4paper,
               DIV=10,
               11pt,
               titlepage=off]{scrartcl}

\usepackage{graphicx} 
\usepackage{mathpazo}
\usepackage[margin=1in]{geometry}
\usepackage{amsmath,amssymb,amsthm}
\usepackage{mathtools}
\usepackage{enumitem,microtype}
\usepackage[dvipsnames]{xcolor}
\usepackage[most]{tcolorbox}
\usepackage[colorlinks=true,linkcolor=blue!55!black,urlcolor=blue!55!black,citecolor=blue]{hyperref}

\usepackage{complexity}

\usetikzlibrary{arrows.meta,positioning,shapes.geometric,backgrounds}
 
\definecolor{cEst}{RGB}{31,88,160}    
\definecolor{cSearch}{RGB}{34,120,62} 
\definecolor{cQ}{RGB}{110,55,160}     
\definecolor{cSpec}{RGB}{190,105,20}  
\definecolor{cUnc}{RGB}{95,95,95}

\newtheorem{theorem}{Theorem}[section]
\newtheorem*{theorem*}{Theorem}
\newtheorem{corollary}{Corollary}

\newtheorem{lemma}[theorem]{Lemma}
\newtheorem{proposition}[theorem]{Proposition}
\theoremstyle{definition}
\newtheorem{definition}{Definition}
\newtheorem{assumption}{Assumption}
\theoremstyle{remark}
\newtheorem*{remark}{Remark}
\newcommand{\betti}{\beta}
\newcommand{\skel}{\mathrm{skel}}

\renewcommand{\skel}{\operatorname{skel}}
\newcommand{\rank}{\operatorname{rank}}
\newcommand{\Ker}{\operatorname{ker}}

\newcommand{\Gnp}{G(n,\tfrac12)}
\newcommand{\Gnpk}{G(n,\tfrac12,k)}

\usepackage{authblk}
\title{Average-case hardness of Betti number estimation}
\author{Sergii Strelchuk\thanks{\texttt{Sergii.Strelchuk@cs.ox.ac.uk}}}
\author{Sathyawageeswar Subramanian\thanks{\texttt{Sathya.Subramanian@cs.ox.ac.uk}}}
\author{Adam Wesołowski\thanks{\texttt{Adam.Wesolowski@cs.ox.ac.uk}}}
\affil{\small \textit{Department of Computer Science, University of Oxford, Parks Rd, Oxford OX1 3QG, United Kingdom}}
\date{}

\begin{document}
\maketitle
\vspace{-1.4cm}
\begin{abstract}
We establish the \textit{average-case} hardness of Betti number estimation on random clique complexes via a reduction from the planted clique problem. We further show that our reduction implies a series of hardness results for many problems in both classical and quantum Topological Data Analysis (qTDA).  
Under the classical planted clique conjecture, no randomized polynomial-time Betti number estimator achieves additive error below $\tfrac12$ with
constant advantage. Under a new quantum planted clique conjecture that we introduce, the
same conclusion holds for quantum polynomial-time algorithms. We also obtain related conditional hardness results for homology vanishing, additive approximations with larger error tolerances, preparation of simplex and harmonic states, cycle recovery, and counting eigenvalues at low energy.
Our reduction clarifies the structural requirements for quantum advantage in TDA and provides a new lens to investigate the classical and quantum complexity of related problems. 
\end{abstract}

\tableofcontents
\newpage

\clearpage
\section{Introduction}
\label{sec:intro}
Topological Data Analysis is a framework for studying the shape of complex
data in high dimensions. One of its basic invariants is the Betti number
$\betti_d$, which is the dimension of the $d$-th homology group and counts
independent homology classes in dimension $d$. TDA has been applied to
biological data~\cite{topaz2015topological}, voting
data~\cite{lum2013extractingwithtda}, and financial time
series~\cite{1financial}. It has also been used in cancer
research~\cite{nicolau2011topology,rabadan2020identification} and medical
imaging~\cite{singh2023topological}.
For large complexes, constructing and reducing the relevant boundary matrices
can be computationally expensive. This has motivated a long line of work on
quantum algorithms for Betti number estimation. See
Ref.~\cite{GunnKornerup2019} for an early review. Complexity results have also
identified limitations of quantum algorithms for related topological
problems~\cite{Edelsbrunner2014,SchmidhuberLloyd2023}. These worst-case results
do not, however, determine the complexity of instances that may be drawn from a specified
distribution. In this work, we study the classical and quantum
\textit{average-case} hardness of Betti number estimation under the
distribution associated with the planted clique problem.

\paragraph{Quantum algorithms for computing Betti numbers.}
Lloyd, Garnerone, and Zanardi (LGZ)\\~\cite{LloydGarneroneZanardi2016} were the first to propose a quantum algorithm that estimates the
\emph{normalized} Betti number $\beta_d/f_d$ to inverse-polynomial additive
error. A decade of follow-up work has both built this sketch
into a genuine pipeline and, simultaneously optimised the algorithms. On the constructive
side, these works have given rigorous analyses of the spectral gap and density dependencies~\cite{GunnKornerup2019,GyurikCadeDunjko2022}, extension to persistent Betti
numbers~\cite{Hayakawa2022}, qubit- and depth-efficient implementations~\cite{Ubaru2021,McArdleGilyenBerta2022,Akhalwaya2022}, and end-to-end resource
estimates for candidate advantage regimes~\cite{Berry2024}. On the limiting
side, the efficiency guarantee has been understood to rest on three promises:
\begin{enumerate}
    \item the complex must be clique-dense (else the simplex state cannot be prepared efficiently);
    \item the spectral gap of $\Delta_d$ must be at least inverse polynomial;
    \item and the normalized
    Betti number must not be exponentially small.
\end{enumerate}

Path-integral and Krylov-type classical algorithms estimate
normalized Betti numbers under closely related density and gap assumptions~\cite{Apers2022,Berry2024}, eroding the speedup exactly where the quantum
algorithm is efficient. Furthermore, Schmidhuber and Lloyd observed that generic complexes have normalized Betti numbers so small that the estimate carries no
information precisely where state preparation is easiest~\cite{SchmidhuberLloyd2023}.

\paragraph{The worst-case complexity landscape.}
In parallel, a sequence of hardness results has outlined the intrinsic difficulty of problems in qTDA. Computing Betti numbers of a clique complex exactly
is \#\P-hard, and approximating them, even to within a multiplicative
factor is
\NP-hard~\cite{SchmidhuberLloyd2023}. Computing the Euler characteristic (an
alternating sum of the number of faces (elements) of dimensions $i$ in the complex) is already \#\P-hard for general abstract
complexes~\cite{RouneSaenz2013}. Deciding whether the homology of a clique
complex vanishes is \NP-hard~\cite{AdamaszekStacho2016}. Cade and Crichigno established this for the closely
related cohomology problem of supersymmetric systems~\cite{CadeCrichigno2024},
Crichigno and Kohler for clique complexes~\cite{CrichignoKohler2024},
and King and Kohler for the gapped (promise) version on weighted graphs, which
they also place inside \QMA~\cite{KingKohler2023}. For the estimation task the
sharpest calibration is due to Gyurik, Cade, and Dunjko~\cite{GyurikCadeDunjko2022}: estimating the low-lying spectral density of a
succinctly specified sparse Hermitian matrix is DQC$_1$-hard, so an efficient classical
simulation of the quantum estimation pipeline on all inputs would collapse the one-clean-qubit
model. On the related problem of \textit{homology localization}, Chen and Freedman proved that computing
minimal cycle representatives of homology classes is \NP-hard to approximate~\cite{ChenFreedman2011}.
 
All of these results are \emph{worst-case}: the hard instances are adversarial, and none of them addresses the complexity of Betti-number
estimation on the inputs TDA is typically run on. For instance, the theory of random
topology shows that clique complexes of Erd\H{o}s--R\'enyi graphs have highly
structured homology: for $X(G(n,p))$ for a fixed $0<p<1$ the nontrivial homology statistically concentrates in a narrow
band of dimensions and vanishes outside it~\cite{Kahle2009,Kahle2014}, so the measure of the set of worst-case instances may simply be extremely close to zero on random inputs. Whether any natural distribution over inputs
makes Betti-number estimation hard \textit{on average} for classical and
quantum algorithms alike has, to our knowledge, remained open. It is this
gap that we address in this paper.

\paragraph{Average-case hardness from planted clique.}
\label{sec: avg-case-hardness}
Our source of average-case hardness is the \textit{planted clique problem}, introduced by
Jerrum~\cite{Jerrum1992} and Ku\v cera~\cite{Kucera1995}: distinguish an
Erd\H{o}s--R\'enyi random graph $G(n,\tfrac12)$ from the same graph with a clique
planted on $k$ random vertices that we call $\Gnpk$. Spectral and combinatorial algorithms succeed
in polynomial time for $k = \Omega(\sqrt n)$~\cite{AlonKrivelevichSudakov1998,DekelGurelGurevichPeres2014,DeshpandeMontanari2015},
while for $(2+\varepsilon)\log_2 n \le k \le n^{1/2-\varepsilon}$ no efficient
algorithm is known despite three decades of effort. The conjecture that none
exists is supported by \textit{unconditional lower bounds} in every standard restricted
model: statistical-query algorithms~\cite{FeldmanGRVX2017}, the sum-of-squares
hierarchy~\cite{BarakHKKMP2019}, Lov\'asz--Schrijver relaxations~\cite{FeigeKrauthgamer2003}, bounded-depth circuits~\cite{Rossman2008}, the
Metropolis process~\cite{Jerrum1992}, and low-degree polynomial tests~\cite{KuniskyWeinBandeira2019,SchrammWein2022}. On this basis, the planted
clique conjecture (PC) has become a standard hardness primitive for
average-case reductions, underlying lower bounds for sparse PCA~\cite{BerthetRigollet2013}, submatrix detection~\cite{MaWu2015}, community
detection~\cite{HajekWuXu2015}, certification of restricted isometry~\cite{KoiranZouzias2014}, Nash-equilibrium approximation~\cite{HazanKrauthgamer2011}, and a web of further problems~\cite{BrennanBreslerHuleihel2018,BrennanBresler2020,ManurangsiRubinsteinSchramm2021}.
Recently, Hirahara and Shimizu proved that many formalizations of PC are essentially equivalent~\cite{HiraharaShimizu2024}. Their
equivalences supply two ingredients we use throughout this work. First, hardness at
\emph{every} constant distinguishing advantage is equivalent to the standard
conjecture, which lets us state our theorem for algorithms succeeding with
probability $\tfrac12+\delta$ for arbitrarily small constant $\delta>0$.
Second, their results calibrate exactly how far such statements can be pushed:
a folklore edge-counting test distinguishes the two distributions with
advantage $\Theta(k^2/n)$~\cite{Trevisan2018,HiraharaShimizu2024}, so no reduction can rule out all
inverse-polynomial advantage, a subtlety that has invalidated at least one
published hardness framework~\cite{ElrazikRobereSchusterYehuda2022} as noted by~\cite{HiraharaShimizu2024}. Therefore, our constant-advantage statement for (q)TDA is also optimal in this
sense. In~\cite{HiraharaShimizu2024} the authors further identify a detection--recovery gap~\cite{HiraharaShimizu2024,BreslerJiang2023,SchrammWein2022}: while detection
admits the $\Theta(k^2/n)$-advantage edge count, \emph{recovering} the clique
is hard under PC conjecture even with success probability $n^{-c}$ for every constant
$c$, a distinction our corollaries inherit. Figure~\ref{fig:Implications} depicts all implications we have identified as corollaries of our main result. In particular, state preparation for the LGZ algorithm, homology localization, vanishing homology, Betti number to large additive approximation, are all hard on average under the distribution $\mathcal{D}_k$ originating from Erd\H{o}s--R\'enyi clique complexes.

\paragraph{The importance of the distribution choice.}

Average-case hardness is inherently defined over some fixed choice of a distribution $D(x)$ of problem instances $x$. Different choices of $D(x)$ might lead to different \textit{average-case} hardness statements, thus a choice of $D(x)$ should be well motivated. In this work we adopt the distribution $\mathcal{D}_k \;=\; \tfrac12\,G(n,\tfrac12)\;+\;\tfrac12\,G(n,\tfrac12,k)$ from the planted clique problem, which naturally derives from the Erd\H{o}s--R\'enyi distribution $G(n,\tfrac12)$ (see sec~\ref{sec:graphsandmodels} for definitions). Erd\H{o}s--R\'enyi distribution is a very common and well studied distribution of random complexes~\cite{Kahle2009,Kahle2014,sharma2018solving, BIO_ER} which has been shown to be relevant to modeling the human brain~\cite{BIO_ER}, as well as to machine learning and computer vision tasks such as object matching, object detection, and Structure from Motion (SfM)~\cite{sharma2018solving}.
 We use
the balanced distribution for the planted clique problem. This choice is useful
because planted clique is widely used in average-case reductions and because
our reduction leaves the graph instance unchanged. We do
not claim that $\mathcal{D}_k$ is a statistical model for the data usually
encountered in TDA. It remains open to establish comparable hardness results
for distributions arising in
applications.

\begin{figure}[htpb!]
    \centering

\begin{tikzpicture}[
  font=\scriptsize,
  every node/.style={align=center},
  spoke/.style={-{Latex[length=2.2mm]}, line width=0.7pt, draw=#1!75!black, shorten >=2pt, shorten <=2pt},
  chain/.style={-{Latex[length=2mm]}, line width=0.6pt, densely dashed, draw=black!60, shorten >=2pt, shorten <=2pt},
  box/.style={rounded corners=2pt, draw=#1!80!black, fill=#1!12, line width=0.7pt,
              text width=3.05cm, inner sep=4pt, minimum height=1.05cm},scale=0.97
  ]

\node[ellipse, draw=black, line width=1.1pt, fill=black!6, inner sep=5pt,
      text width=5.2cm] (THM) at (0,0)
  {\textbf{\small Average-case hardness of Betti-number estimation}\\[2pt]
   Under the classical and quantum \textit{PC} conjectures~\ref{ass:pc_classical} and~\ref{ass:pc}, no poly-time algorithm estimates
   $\beta_d\bigl(\mathrm{skel}_d X(G)\bigr)$, $d\in\mathcal B$, to additive error $<\tfrac12$
   over $\mathcal{D}_k \;=\; \tfrac12\,G(n,\tfrac12)\;+\;\tfrac12\,G(n,\tfrac12,k)$.\\[2pt]
   \emph{The reduction:} $\;\beta_d=0$ (null) vs.\ $\beta_d\ge\binom{k-1}{d+1}$ (planted), w.h.p.};
 
\newcommand{\place}[4]{\node[box=#2] (#1) at ({7.35*cos(#3)},{5.15*sin(#3)}) {#4};}
 
\place{APPROX}{cEst}{90}{\textbf{Multiplicative \& coarse additive approximation}\\
  any factor; additive error up to $e^{c\log^2 n}$. (cf.~\cite{SchmidhuberLloyd2023})}
\place{NORM}{cEst}{57}{\textbf{Normalized Betti number estimation}\\
  $\beta_d/f_d$: $\,0$ vs.\ $1-\tfrac{d+1}{k}$ the quantity used in LGZ algo.}
\place{LLSD}{cSpec}{20}{\textbf{Low-lying spectral density}\\
  $\#\{\lambda_i(\Delta_d)<\lambda\}$, $\lambda\in(0,k)$; avg-case of the DQC1-hard LLSD (cf.~\cite{GyurikCadeDunjko2022}).}
\place{HSTATE}{cQ}{-15}{\textbf{Harmonic-state preparation}\\
  mixed state on $\ker\Delta_d$ (QTDA output) to const.\ trace distance}
\place{HARM}{cSearch}{-55}{\textbf{Harmonic representatives}\\
  any vector near $\ker\Delta_d$; kernel is $1$-dim.\ on the plant at $d=k-2$}
\place{SSTATE}{cQ}{-92}{\textbf{Simplex-state preparation}\\
  $|\psi_d\rangle\propto\sum_{\sigma\in X_d}|\sigma\rangle$ (LGZ input state)}
\place{SAMP}{cSearch}{-131}{\textbf{Simplex search \& sampling}\\
  output/sample any $d$-simplex; the ``clique sampling'' subroutine}
\place{LOC}{cSearch}{-168}{\textbf{Homology localization}\\
  any nonzero $d$-cycle / minimal representative (cf.~\cite{ChenFreedman2011}).}
\place{VANISH}{cEst}{160}{\textbf{Homology vanishing}\\
  decide $H_d=0$; avg-case analogue of \QMA$_1$-hard clique homology (cf.~\cite{CrichignoKohler2024,KingKohler2023, SchmidhuberLloyd2023})}
\place{UNC}{cUnc}{124}{\textbf{Unconditional lower bounds}\\
  no low-degree test, no SQ algorithm, no poly-size SoS certificate of $H_d=0$ (see section~\ref{sec: avg-case-hardness} for references)}
 
\draw[spoke=cEst]    (THM) -- (APPROX);
\draw[spoke=cEst]    (THM) -- (NORM);
\draw[spoke=cSpec]   (THM) -- (LLSD);

\draw[spoke=cQ]      (THM) -- (HSTATE);
\draw[spoke=cSearch] (THM) -- (HARM);
\draw[spoke=cQ]      (THM) -- (SSTATE);
\draw[spoke=cSearch] (THM) -- (SAMP);
\draw[spoke=cSearch] (THM) -- (LOC);
\draw[spoke=cEst]    (THM) -- (VANISH);
\draw[spoke=cUnc, densely dotted] (THM) -- (UNC);

\draw[chain] (SAMP) to[bend right=12] node[midway, below, sloped, font=\tiny]{} (SSTATE);
\draw[chain] (HARM) to[bend right=12] node[midway, below, sloped, font=\tiny]{} (HSTATE);

\end{tikzpicture}

    \caption{Illustration of selected implications of the reduction from \textit{planted clique problem} to Betti number estimation. The references in each box link the worst-case hardness variant corresponding to our average-case results. To the best of the authors knowledge the remaining problems do not have an established worst-case hardness.}
    \label{fig:Implications}
\end{figure}

\paragraph{A new lens on quantum advantage in TDA.}
Our results carry implications in two directions for the prospect of quantum advantage.
Negatively, they close off one tempting argument: Betti number estimation and other aforementioned qTDA tasks remain hard on natural random inputs, so under this distribution one cannot witness a
quantum speedup and moreover,
at the critical dimension, even the pipeline's state-preparation step is
infeasible, for reasons independent of the well-known density obstruction~\cite{GyurikCadeDunjko2022,SchmidhuberLloyd2023,Berry2024}. Positively, they
sharpen the search: any demonstration of quantum advantage in TDA must locate
instance families that evade our reduction. For example, complexes that are
clique-dense with inverse-polynomially gapped Laplacians and large normalized
Betti numbers and remain classically hard for reasons \emph{other} than the planted clique
structure. We show that under the distribution of instances coming from the planted clique problem the problem of Betti number estimation is hard for both classical and quantum
algorithms. 
The reduction thus functions as a filter on proposed advantage
regimes, and clarifies the boundary between the worst-case quantum hardness
results~\cite{CadeCrichigno2024,CrichignoKohler2024,KingKohler2023,GyurikCadeDunjko2022}
and the average-case landscape closer to the practical data on which TDA is expected to operate.

\subsection{Organisation}
\label{sec:org}

The remainder of the paper is organised as follows. Section~\ref{sec:contributions} introduces our
main theorem, the average-case hardness of Betti-number estimation on clique
complexes drawn from the graph distribution of the planted clique problem. Section~\ref{sec:background} fixes notation and recalls the definitions of
planted-clique problem, clique complexes and Betti numbers, and the local
access model through which the graph specifies the complex and its
combinatorial Laplacian in polynomial time. Section~\ref{sec:RED} gives the reduction
from planted clique to Betti-number estimation, proves Theorem~\ref{thm:main}, and
discusses how the argument behaves for skeleta of other dimensions and random graphs with varying parameter $p$.
Section~\ref{sec:algIMPL} draws out the algorithmic consequences. Section~\ref{sec:identifyadvantage} outlines how our contribution helps with the search for quantum advantage in qTDA. Section~\ref{sec:implications} derives the
corollaries for other tasks of quantum Topological Data Analysis,
namely normalized Betti number estimation, homology vanishing, homology
localization, simplex- and harmonic-state preparation and the low-lying
spectral density, and compares them with worst-case state-preparation
hardness. Section~\ref{sec:open} closes with a discussion and a list of open problems.

\section{Contributions}
\label{sec:contributions}
Our results connect the hardness of the well known problem in graph theory, namely the planted clique problem, to a family of problems in computational algebraic topology, known broadly under the name of topological data analysis.
We give the first conditional average-case hardness reduction for problems in topological data
analysis. Our work provides a new framework that allows one to investigate questions in algebraic topology via graph theoretical tools.
Based on the established \textit{classical planted clique conjecture}~\cite{HiraharaShimizu2024, Kucera1995, Jerrum1992} we derive the classical average-case hardness results.
We further introduce \textit{the quantum planted clique conjecture}    (see~\ref{ass:pc}), based on which we derive the quantum conditional average-case hardness of Betti number estimation (and other problems) of clique complexes drawn from the planted clique distribution
$\mathcal{D}_k = \tfrac12 G(n,\tfrac12) + \tfrac12 G(n,\tfrac12,k)$. A demonstration of an exponential quantum speedup for the Betti number estimation under $\mathcal{D}_k$ would not only falsify the quantum planted clique conjecture but also imply the first average-case quantum speedup (assuming the classical planted clique conjecture stands).

In section~\ref{sec:RED} we prove the following:
\begin{theorem}[Average-case hardness of Betti number estimation]
\label{thm:main}
Let $k=k(n)$ satisfy the inequality $(2+\epsilon)\log{n} < k<n^{1/2-\epsilon}$ and set $d=k-2$. For every
constant $\delta>0$, the following statements hold.
Under Assumption~\ref{ass:pc_classical}, no randomized polynomial-time
algorithm estimates
\[
\betti_d\bigl(\skel_d X(G)\bigr)
\]
to additive error below $\tfrac12$ with success probability at least
$\tfrac12+\delta$ over $G\sim\mathcal{D}_k$.
Under Assumption~\ref{ass:pc}, no quantum polynomial-time algorithm with a
classical output estimate satisfies the same guarantee.
\end{theorem}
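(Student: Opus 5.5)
The plan is a direct distinguishing reduction: an estimator $A$ with additive error below $\tfrac12$ and success probability $\tfrac12+\delta$ gives a planted clique distinguisher with constant advantage. That contradicts Assumption~\ref{ass:pc_classical}, or Assumption~\ref{ass:pc} in the quantum case, which by the Hirahara--Shimizu equivalence may be stated at every constant advantage. The distinguisher runs $A$ on the unchanged input graph $G$, rounds the output, and answers ``planted'' if and only if the rounded value is nonzero. An estimate within error below $\tfrac12$ of an integer rounds to that integer, so the rule is correct whenever $A$ succeeds, provided the Betti number is $0$ under the null and nonzero under the plant.

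The structural core consists of two high-probability statements at $d=k-2$.

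\textbf{Null case.} For $G\sim G(n,\tfrac12)$ and $k\ge(2+\epsilon)\log n$, the clique number is at most $2\log_2 n+O(1)$, and a first-moment bound on the number of $(k-1)$-cliques gives $\binom{n}{k-1}2^{-\binom{k-1}{2}}=o(1)$. Hence w.h.p.\ $X(G)$ has no $(k-2)$-simplices, so $C_d=0$ and $\beta_d(\skel_d X(G))=0$. I would double-check the logarithm base so that $(k-1)$-cliques are genuinely absent at the lower end of the range.

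\textbf{Planted case.} The planted $k$-clique $K$ gives a full $(k-1)$-simplex. Its $d$-skeleton, with $d=k-2$, contains the boundary sphere $\partial\Delta^{k-1}\cong S^{k-2}$. Since $\skel_d$ has no $(d+1)$-faces, $\beta_d(\skel_d X)=\dim Z_d$ is just the dimension of the cycle space. The fundamental cycle of $\partial\Delta^{k-1}$ is nonzero, so $\beta_d\ge1$, and computing the cycle space of the skeleton more carefully gives $\beta_d\ge\binom{k-1}{d+1}$ as stated in the overview.

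The main obstacle is ensuring that the extra $d$-simplices from the random part cannot destroy this cycle. Because there are no boundaries in a top-dimensional skeleton, $Z_d$ only grows with added simplices, and the cycle on $K$ persists. So the remaining work is to confirm that the lower bound is monotone in this way rather than to control the random part. I would also verify that $Z_d$ is computed over a field, so that $\beta_d$ is well defined.

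The advantage then follows by accounting. Let $E$ be the event that the structural claim holds for the sampled $G$, with $\Pr[E]=1-o(1)$ under $\mathcal D_k$. The distinguisher is correct whenever $A$ succeeds and $E$ holds, so its success probability is at least $\tfrac12+\delta-o(1)$, which exceeds $\tfrac12+\delta/2$ for large $n$. The running time is polynomial because the reduction only forwards $G$; the local access model of Section~\ref{sec:background} supplies the complex and Laplacian to $A$ in polynomial time. For quantum $A$ with classical output the same argument applies verbatim, giving a quantum polynomial-time distinguisher and contradicting Assumption~\ref{ass:pc}. The constraint $k<n^{1/2-\epsilon}$ is used only to place the instances in the conjectured hard regime.
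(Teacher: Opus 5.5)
Your proposal is correct and follows the paper's proof. The detector is the same: the identity on instances, thresholding at $\tfrac12$. The null bound is the same first-moment count of $(k-1)$-cliques, and your union-bound accounting (success $\tfrac12+\delta-o(1)$ on the balanced mixture, i.e.\ advantage $2\delta-o(1)$) is equivalent to the paper's $s_0+s_1\ge 1+2\delta$ computation.

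One small improvement over the paper: your monotonicity observation (in a top skeleton $H_d=Z_d$, which only grows as simplices are added) makes the planted bound $\beta_{k-2}\ge 1$ hold deterministically. The paper's Lemma~\ref{lem:plant} instead argues w.h.p.\ that the plant's facets are the only $(k-2)$-simplices, which is unnecessary.
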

See section~\ref{sec:background} for definitions of $\textsc{Betti}$ and section~\ref{sec:RED} for the statement of planted-clique conjectures.
In section~\ref{sec:implications} we extend the scope of the average-case hardness results to a family of problems in topological data analysis and other problems related to Betti number estimation, the extent of results is depicted in Figure~\ref{fig:Implications}.

\subsection{Technical overview}
We give a simple reduction from \textit{planted clique problem} to the
Betti-number estimation on clique complexes. 
The starting observation is that the clique
complex of a complete graph on $k$ vertices is a $k-1$ simplex $\sigma^{k-1}$,
which is contractible, so planting a $k$-clique contributes no non-trivial homology classes.
The reduction proceeds by fixing the dimension $k$ in the PC window (see~
\ref{ass:pc}) and
truncating the complex at dimension $d = k-2$, the planted $k$-clique a
topologically invisible full simplex  becomes the boundary sphere
$\partial\sigma^{k-1} \cong S^{k-2}$, contributing a non-trivial homology class while the null complex has, with high probability, no
$(k{-}2)$-simplex at all. Consequently $\beta_{k-2}$ equals $0$ in the null
case and is at least $1$ in the planted case, and any algorithm estimating it
to additive error below $\tfrac12$ decides planted clique. Our main theorem
states that, under the classical and quantum PC conjectures, no polynomial-time classical or quantum algorithm
achieves this with success probability $\tfrac12+\delta$ for any constant
$\delta>0$ over $\mathcal{D}_k=\frac{1}{2}\Gnp+\frac{1}{2}\Gnpk$. Notably, the
input to the Betti problem \emph{is} the random graph itself and the reduction
is the identity on the instance so the hardness attaches to the same
input distribution. We also present our result as complementing the
worst-case \#\P-, \NP-, \QMA$_1$-, and DQC$_1$-hardness results 
above~\cite{SchmidhuberLloyd2023,AdamaszekStacho2016,CrichignoKohler2024,GyurikCadeDunjko2022}
with an average-case statement, and complementing random-topology theory~\cite{Kahle2009,Kahle2014} with a computational hardness statement.

\subsection{Classical and quantum results}
We highlight that the reduction presented in section~\ref{sec:RED} is the same both classically and in the quantum setting. Under the well established classical planted clique conjecture (Assumption~\ref{ass:pc_classical}) Betti number estimation and other problems mentioned in section~\ref{sec:implications} are \textit{classically} average-case hard. We introduce a quantum clique conjecture (assumption~\ref{ass:pc}) under which the conditional average-case hardness follows in the quantum setting. A polynomial time quantum algorithm for Betti number estimation in the hard regime would render the conjecture false and the problems in section \ref{sec:implications} quantum-average-case easy, while remaining classical average-case hard. This remains an open problem.

\section{Background}
\label{sec:background}

\subsection{Notation}
\label{sec:notation}
 
Unless stated otherwise, all asymptotics are taken as the number of vertices
$n \to \infty$, and $\log$ denotes $\log_2$. We write $[n] = \{1,\dots,n\}$ and
$\binom{[n]}{j}$ for the family of $j$-element subsets of $[n]$. We use the
standard Landau symbols $O(\cdot)$, $o(\cdot)$, $\Omega(\cdot)$, $\Theta(\cdot)$,
and write $f \lesssim g$ as shorthand for $f = O(g)$. An event
$E = E_n$ holds \emph{with high probability} (w.h.p.) if $\Pr[E_n] \to 1$ as
$n \to \infty$. Two fixed positive constants recur throughout: $\varepsilon > 0$ is used to
delimit the window of the planted-clique conjecture, and
$\delta > 0$ denotes a distinguishing advantage. We say a randomised or quantum
algorithm is \emph{efficient} if it runs in time $\poly(n)$.

\paragraph{Graphs and random models.}
\label{sec:graphsandmodels}
$G$ denotes a simple graph on vertex set $[n]$. The Erd\H{o}s--R\'enyi
distribution $G(n,\tfrac12)$ includes each of the $\binom{n}{2}$ edges
independently with probability $\tfrac12$; the \emph{planted} distribution
$G(n,\tfrac12,k)$ is obtained from $G_0 \sim G(n,\tfrac12)$ by choosing a
uniformly random $k$-subset $S \subseteq [n]$ and adding every edge inside $S$.
$\mathrm{PClique}(n,k)$ is the associated detection problem
(Definition~\ref{def:PC}), and $C_{\max}(G)$ the size of a largest clique
of $G$.
 \paragraph{Simplicial and clique complexes.}
An (abstract) simplicial complex $K$ on vertex set $[n]$ is a
family of subsets of $[n]$ such that any  $\sigma, \tau$ satisfy: $\sigma \in K$ and $\tau \subseteq \sigma$ then
$\tau \in K$. A set $\sigma \in K$ of cardinality $d+1$ is a \emph{$d$-simplex},
and its subsets of cardinality $d$ are its $d-1$-\emph{faces} or \textit{facets} of $\sigma$. We write $K_d$ for the
set of $d$-simplices of $K$ and $f_d(K) = |K_d|$ for their number.
 
For a graph $G$ on $[n]$, the \emph{clique complex} (or \emph{flag complex})
$X(G)$ is the simplicial complex whose $d$-simplices are exactly the
$(d{+}1)$-cliques of $G$; equivalently, a set of vertices is a simplex iff it is
pairwise adjacent. Thus $X(G)$ is determined by the graph alone, and we identify
a $d$-simplex with the $(d{+}1)$-subset of $[n]$ that spans it. The
\emph{$d$-skeleton} $\skel_d X(G)$ retains every face of dimension $\le d$ and
discards all faces of dimension $> d$. Writing $f_d(G) := f_d(X(G))$ for the
number of $d$-simplices (equivalently, of $(d{+}1)$-cliques), independence of the
$\binom{n}{2}$ edges gives (in the Erd\H{o}s--R\'enyi model with $p=1/2$)
\begin{equation}
  \mathbb{E} f_d(G)
  \;=\; \binom{n}{d+1}\, 2^{-\binom{d+1}{2}},
  \label{eq:expected-faces}
\end{equation}
since a $(d{+}1)$-clique requires all $\binom{d+1}{2}$ of its edges to be
present.

\paragraph{Chains, boundaries, and homology.}

We work over $\mathbb{R}$ throughout. A $j$-chain is a formal real linear
combination of the oriented $j$-simplices of $K$, and the $j$-chains form a
vector space $C_j=C_j(K;\mathbb{R})$. The \emph{boundary map}
$\partial_j : C_j \to C_{j-1}$ sends each $j$-simplex to the signed sum of the
$(j{-}1)$-simplices on its boundary; it records the oriented
boundary of a chain. Its defining property is that \emph{the boundary of a
boundary is empty}, $\partial_{j-1}\partial_j = 0$.
 
Two subgroups of $C_j$ carry the topological content. The group of
\emph{$j$-cycles}
\[
  Z_j \;=\; \ker \partial_j \;=\; \{\, c \in C_j : \partial_j c = 0 \,\}
\]
consists of the closed chains, those with empty boundary. The group of
\emph{$j$-boundaries}
\[
  B_j \;=\; im \partial_{j+1} \;=\; \{\, \partial_{j+1} c : c \in C_{j+1} \,\}
\]
consists of the chains that are themselves the boundary of some
$(j{+}1)$-chain, that is, the cycles already filled in by
$(j{+}1)$-simplices sitting one dimension above. Since the boundary of a boundary
is empty, every boundary is a cycle, so $B_j \subseteq Z_j$. Homology records the
cycles that are left over: the closed $j$-cycles that do \emph{not} bound any
higher-dimensional part of the complex. The \emph{$j$-th homology} and
\emph{Betti number} are
\begin{equation}
  H_j(K) \;=\; Z_j / B_j,
  \qquad
  \beta_j(K) \;=\; \dim H_j(K),
\end{equation}
so $\beta_j$ counts the independent cycles that do not bound any higher dimensional structures. Concretely
$\beta_0$ is the number of connected components, $\beta_1$ the number of
independent loops, and $\beta_d$ the number of independent $d$-dimensional
voids. Real coefficients agree with rational ones and with the integral rank,
$\beta_j(K)=\operatorname{rank}_{\mathbb{Z}}H_j(K;\mathbb{Z})$; over $\mathbb{F}_p$ the
dimension of $H_j$ can be strictly larger, by the torsion contribution.

\paragraph{The combinatorial Laplacian and Hodge theory.}
The \emph{combinatorial Laplacian} in dimension $d$ is
\begin{equation}
  \Delta_d
  \;=\; \partial_d^{\top}\partial_d + \partial_{d+1}\partial_{d+1}^{\top},
  \label{eq:laplacian}
\end{equation}
a positive-semidefinite operator on $C_d$. The combinatorial Hodge theorem
identifies harmonic chains with homology classes, so that for every finite
complex $K$,
\begin{equation}
  \beta_d(K) \;=\; \dim H_d(K) \;=\; \dim \ker \Delta_d(K).
  \label{eq:hodge}
\end{equation}
This spectral characterisation is what makes Betti numbers amenable to
Laplacian-based (in particular quantum) algorithms. We will work almost
exclusively with a \emph{top} skeleton $K = \skel_d X(G)$, in which there are no
$(d{+}1)$-simplices, so $\partial_{d+1} = 0$ and the second (``filling'') term
of \eqref{eq:laplacian} vanishes. Then $\Delta_d = \partial_d^{\top}\partial_d$
and \eqref{eq:hodge} collapses to the elementary identity
\begin{equation}
  \beta_d\!\left(\skel_d X(G)\right)
  \;=\; \dim \ker \partial_d
  \;=\; f_d(G) - \rank \partial_d.
  \label{eq:top-skeleton}
\end{equation}

\subsection{Random graphs and the planted-clique problem}

$\Gnp$ denotes the Erd\H{o}s--R\'enyi graph on vertex set $[n]=\{1,\dots,n\}$ in which each
of the $\binom n2$ edges is present independently with probability $\tfrac12$. The
\emph{planted} distribution $\Gnpk$ is obtained by drawing $G_0\sim\Gnp$, choosing a
uniformly random $k$-subset $S\subseteq[n]$, and adding every edge within $S$.

\begin{definition}[Planted-clique detection]
\label{def:PC}
$\textsc{PClique}(n,k)$ is the task of deciding, given a random graph $G$ drawn from $\Gnp$ (the \emph{null} case) whether or not it has been modified by planting a $k$-clique resulting in $\Gnpk$ (the
\emph{planted} case). In other words, the problem is to decide which case it is (whether the graph comes from the distribution $\Gnp$ or not). 
\end{definition}

The largest clique size of $\Gnp$ is $C_{max}(\Gnp)=(2+o(1))\log_2 n$ w.h.p. For
$k<(2-\varepsilon)\log_2 n$ the plant is information-theoretically undetectable; for
$k\gtrsim\sqrt n$ spectral algorithms detect it in polynomial time. The intermediate window
is believed intractable.

\subsection{Input and access model}

This subsection makes precise the sense in which ``estimating a Betti number of $X(G)$''
is a problem whose input is the Erd\H{o}s--R\'enyi random graph.

\begin{definition}[Betti number estimation]\label{def:problem}
$\textsc{Betti}$ takes as input a graph $G$ on $[n]$ vertices and an
integer $d$, and asks for an estimate of
\[
\betti_d\bigl(\skel_d X(G)\bigr)
\]
with coefficients in $\mathbb{R}$. The exact version asks for the value of
the Betti number. An additive approximation with error $\eta>0$ outputs
$\widehat{\betti}$ satisfying
\[
|\widehat{\betti}-\betti_d|<\eta.
\]
For $C\ge1$, a $C$-multiplicative approximation outputs
$\widehat{\betti}$ satisfying
\[
\frac{\betti_d}{C}
\le
\widehat{\betti}
\le
C\betti_d.
\]
When $\betti_d=0$, this definition requires
$\widehat{\betti}=0$. We write $\textsc{Betti}(d)$ when the dimension is
fixed.
\end{definition}

The complex $\skel_d X(G)$ and its Laplacian $\Delta_d$ have up to $f_d=n^{\Theta(d)}$
simplices. The Laplacian and the complex are \emph{specified implicitly} by $G$
through a local oracle:

\begin{itemize}[leftmargin=1.4em,itemsep=1pt]
\item a $d$-simplex is named by a $(d{+}1)$-subset of $[n]$, a label of $O(d\log n)$ bits;
\item testing whether a label is a simplex is $\binom{d+1}{2}$ edge look-ups in $G$;
\item the non-zero entries of any row of $\Delta_d=\partial_d^{\!\top}\partial_d$ correspond
      to $d$-simplices sharing a $(d{-}1)$-face with the given one, enumerable by local
      clique checks among $O(dn)$ candidates.
\end{itemize}

Thus from $G$ alone one can answer any local query about $\Delta_d$ in $\mathrm{poly}(n)$
time, this is a standard and efficient access model, and one of the reasons people study clique complexes. The graph is a polynomial-size exact description of the implicit complex; \emph{constructing the Laplacian $\Delta_d$ explicitly} is exponential but is not required,
whereas querying it locally is efficient. Consequently the only conceivable
super-polynomial cost lies in extracting the global spectral quantity
$\dim\Ker\Delta_d$ from this local description which is precisely the difficulty of the Betti number estimation.

 \section{Reduction from PC to Betti number estimation}
\label{sec:RED}
\begin{assumption}[Classical planted-clique conjecture, \textup{PC}]\label{ass:pc_classical}
There is a range of plant sizes

$$
  (2+\varepsilon)\log_2 n \;\le\; k \;\le\; n^{1/2-\varepsilon}
$$

in which \textit{no randomized polynomial-time classical algorithm} distinguishes $\Gnp$ from $\Gnpk$ with
advantage $\ge\delta$, for any constant $\delta>0$. 
\end{assumption}
\begin{assumption}[Quantum planted-clique conjecture, \textup{qPC}]\label{ass:pc}
There is a range of plant sizes

$$
  (2+\varepsilon)\log_2 n \;\le\; k \;\le\; n^{1/2-\varepsilon}
$$

in which \textit{no polynomial-time quantum algorithm} distinguishes $\Gnp$ from $\Gnpk$ with
advantage $\ge\delta$, for any constant $\delta>0$. 
\end{assumption}

This constant-advantage form is
equivalent to the standard formulations of the conjecture by the equivalences of
Hirahara--Shimizu~\cite{HiraharaShimizu2024}, so nothing stronger than the usual assumption is being made. It is a standard detection form of the conjecture, it has been used to show
the \textit{average-case} hardness of sparse PCA, community detection, and related problems, and is
supported by unconditional lower bounds against the Sum-of-Squares, statistical-query, and
low-degree-polynomial models.

Fix $k$ in the PC window \ref{ass:pc} and set $d=k-2$. We show estimating
$\beta_{k-2}\bigl(\skel_{k-2}X(G)\bigr)$ w.h.p. decides the
\textit{planted clique problem}.

\begin{lemma}[Null value]\label{lem:null}
For $k\ge(2+\varepsilon)\log_2 n$, w.h.p.\ $G_0\sim\Gnp$ has no $(k{-}1)$-clique, whence
$\beta_{k-2}\bigl(\skel_{k-2}X(G_0)\bigr)=0$.
\end{lemma}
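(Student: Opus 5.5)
Two steps: a first-moment bound showing $G_0$ has no $(k-1)$-clique w.h.p., and then the purely combinatorial observation that such a graph has an empty top-dimensional chain group in $\skel_{k-2}X(G_0)$.

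\textbf{Step 1 (first moment).} Let $m=k-1$ and let $f_{k-2}(G_0)$ count the $m$-cliques. By \eqref{eq:expected-faces},
\[
\mathbb{E} f_{k-2}(G_0)=\binom{n}{m}2^{-\binom{m}{2}}\le \Bigl(n\,2^{-(m-1)/2}\Bigr)^{m}.
\]
Since $k\ge(2+\varepsilon)\log_2 n$, we have $m-1=k-2\ge(2+\varepsilon)\log_2 n-2$. Hence
\[
n\,2^{-(m-1)/2}\le 2\,n^{-\varepsilon/2}.
\]
As $m\to\infty$, this gives $\mathbb{E} f_{k-2}\le (2n^{-\varepsilon/2})^{m}\to 0$. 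Markov's inequality then yields $\Pr[f_{k-2}(G_0)\ge1]\to0$.

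This is the only step with any content. The main point of care is that the shift from $k$-cliques to $(k-1)$-cliques costs only a constant factor inside the base. It is absorbed because $\varepsilon>0$ is fixed and $n^{-\varepsilon/2}\to0$. If $k$ grows beyond $n^{1/2}$ the bound only improves, so no upper restriction on $k$ is needed.

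\textbf{Step 2 (topological consequence).} On the event $f_{k-2}(G_0)=0$, the complex $K=\skel_{k-2}X(G_0)$ has no $(k-2)$-simplices, so $C_{k-2}(K)=0$. By \eqref{eq:top-skeleton},
\[
\beta_{k-2}=f_{k-2}-\rank\partial_{k-2}=0.
\]
This holds even without invoking the formula, since $H_{k-2}$ is a quotient of the zero space. Therefore $\beta_{k-2}\bigl(\skel_{k-2}X(G_0)\bigr)=0$ with probability at least $1-\mathbb{E} f_{k-2}\to1$, which proves the lemma.
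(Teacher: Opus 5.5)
Your proof is correct and is essentially the paper's argument. You use the same first-moment bound $\binom{n}{k-1}2^{-\binom{k-1}{2}}\le\bigl(n\,2^{-(k-2)/2}\bigr)^{k-1}$, where you make the paper's $n^{-\varepsilon/2+o(1)}$ base explicit as $2n^{-\varepsilon/2}$, followed by Markov's inequality and the observation that $f_{k-2}=0$ forces $\beta_{k-2}=0$.
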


\begin{proof}
The expected number of $(k{-}1)$-cliques is
$\binom{n}{k-1}2^{-\binom{k-1}{2}}\le\bigl(n\,2^{-(k-2)/2}\bigr)^{k-1}$. For
$k\ge(2+\varepsilon)\log_2 n$ the base is $n^{-\varepsilon/2+o(1)}\to0$, so by Markov's
inequality no $(k{-}1)$-clique exists w.h.p. Then $\skel_{k-2}X(G_0)$ has no
$(k{-}2)$-simplex at all: $f_{k-2}=0$, and $\beta_{k-2}=0$.
\end{proof}

\begin{lemma}[Planted value]\label{lem:plant}
For $k\ge(2+\varepsilon)\log_2 n$, w.h.p.\ over $\Gnpk$,
$\beta_{k-2}\bigl(\skel_{k-2}X(G)\bigr) \geq 1$.
\end{lemma}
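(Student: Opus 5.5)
The plan is to exhibit an explicit nonzero element of $\ker\partial_{k-2}$ and then use the top-skeleton identity \eqref{eq:top-skeleton}, $\beta_d(\skel_d X(G)) = \dim\ker\partial_d$ with $d=k-2$. The argument should be deterministic: it holds for \emph{every} graph containing a $k$-clique, so in particular for every $G\sim\Gnpk$. The ``w.h.p.'' in the statement is then automatic, and the lower bound on $k$ only serves to make $k\ge 3$, so that $d\ge 1$ and the simplices below are genuine.

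\textbf{Step 1 (the planted simplex).} Let $S=\{v_0<\dots<v_{k-1}\}$ be the planted set. Since every pair in $S$ is adjacent, every subset of $S$ is a clique. Hence every $(k-1)$-subset $S\setminus\{v_i\}$ is a $(k-2)$-simplex of $X(G)$, and so of $\skel_{k-2}X(G)$. The set $S$ itself is a $(k-1)$-simplex of $X(G)$, but it is discarded by the truncation.

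\textbf{Step 2 (the boundary sphere as a cycle).} Consider the chain on the full clique complex
\[
z \;=\; \partial_{k-1}[S] \;=\; \sum_{i=0}^{k-1} (-1)^i\,[S\setminus\{v_i\}] \in C_{k-2}.
\]
Every simplex in its support lies in $\skel_{k-2}X(G)$, so $z$ is a chain of the skeleton. The boundary map $\partial_{k-2}$ of the skeleton agrees with that of $X(G)$ in dimension $k-2$, so $\partial_{k-2} z=\partial_{k-2}\partial_{k-1}[S]=0$, i.e.\ $z\in Z_{k-2}$. Moreover $z\neq 0$: its coefficients are $\pm1$ on $k$ distinct basis simplices, which are linearly independent in $C_{k-2}$.

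\textbf{Step 3 (no filling).} In $\skel_{k-2}X(G)$ there are no $(k-1)$-simplices, so $B_{k-2}=0$. Therefore $z$ represents a nonzero class in $H_{k-2}$. Equivalently, by \eqref{eq:top-skeleton}, $\beta_{k-2}=\dim\ker\partial_{k-2}\ge 1$.

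There is no real obstacle here. The only point needing care is Step 2: the truncation must not be confused with removing faces of $S$. One must check that the boundary operator of the skeleton in dimension $k-2$ is the restriction of that of $X(G)$, so that the identity $\partial\partial=0$ transfers. Extra cliques present in $G_0$ only add simplices and cannot reduce $\dim\ker\partial_{k-2}$ below $1$, because $z$ remains a nonzero element of the kernel.
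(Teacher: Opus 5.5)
Your proof is correct. Its core idea is the paper's: truncating at $d=k-2$ removes the filling of the plant simplex and leaves $\partial\sigma^{k-1}$. The logic is organized differently, however, and yours is more robust.

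The paper first invokes Lemma~\ref{lem:null} to assert that w.h.p.\ the $k$ facets of the plant are the only $(k-2)$-simplices. It then reads off $\betti_{k-2}=1$ from the sphere, and finally hedges to $\betti_{k-2}\ge 1$ in case further $(k-1)$-cliques appear. You instead exhibit $z=\partial_{k-1}[S]$ as an explicit nonzero element of $\ker\partial_{k-2}$ and use that $B_{k-2}=0$ in the top skeleton. So your bound holds deterministically for every graph containing a $k$-clique. The hypothesis on $k$ is needed only to get $k\ge 3$. The paper's hedge is replaced by your observation that extra $(k-2)$-simplices can only enlarge $\ker\partial_{k-2}$.

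The paper's route is aiming at a stronger statement: that $\ker\partial_{k-2}$ is spanned by $z$ w.h.p. The later corollaries rely on this (simplex search, localization, harmonic-state preparation, the spectrum $\{0,k\}$). Your argument does not give it, and need not, since the lemma only claims $\ge 1$. Be aware, though, that Lemma~\ref{lem:null} alone does not give it either. That lemma concerns $G_0$ and does not exclude $(k-1)$-cliques of $G$ with $2\le j\le k-2$ vertices in $S$, whose edges inside $S$ need not lie in $G_0$. Ruling these out needs a separate first-moment estimate, namely $\sum_{j=2}^{k-2}\binom{k}{j}\binom{n-k}{k-1-j}2^{-\binom{k-1}{2}+\binom{j}{2}}\to 0$. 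Its largest term, at $j=k-2$, is $\binom{k}{2}(n-k)2^{-(k-2)}=O(k^2n^{-1-\varepsilon})$.
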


\begin{proof}
The planted $k$-clique is a complete graph, corresponding to a clique complex that is the full simplex
$\sigma^{k-1}$. Truncating at dimension $k-2$ deletes its unique top $(k{-}1)$-face and
leaves the boundary $\partial\sigma^{k-1}\cong S^{k-2}$, a simplicial $(k{-}2)$-sphere with
$\beta_{k-2}=1$. By Lemma~\ref{lem:null}, w.h.p.\ the original graph likely contributes no
$(k{-}2)$-simplex outside the planted structure, so w.h.p the only $(k{-}2)$-simplices in $\skel_{k-2}X(G)$
are the $k$ facets of the plant. These alone support exactly one nontrivial homology class, contributing
$\beta_{k-2}=\dim\Ker\partial_{k-2}= 1$. Nevertheless, it is possible that edges from the plant complete another $k-clique$ in the graph resulting in more than one non-trivial homology class in dimension $d=k-2$, i.e. $\betti_d\geq1$ in the planted case.
\end{proof}
\begin{remark}
    For the sake of the reduction it is sufficient that the Betti number in dimension $d$ increases to 1 in the planted case, even though it might increase to a value higher than 1.
\end{remark}

\begin{lemma}[Deciding PC via Betti number]\label{lem:indicator}
For $k\ge(2+\varepsilon)\log_2 n$, w.h.p.\ over the distrubution defining $\textsc{PClique}(n,k)$,
$$
  \beta_{k-2}\bigl(\skel_{k-2}X(G)\bigr)
  \;\geq\;
  1\iff[\,G\ \text{is from the planted case}\,].
$$
$$
  \beta_{k-2}\bigl(\skel_{k-2}X(G)\bigr)
  \;=\;
  0\iff[\,G\ \text{is from the null case}\,].
$$
\end{lemma}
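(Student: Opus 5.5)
The plan is to combine Lemma~\ref{lem:null} and Lemma~\ref{lem:plant} on the mixture $\mathcal{D}_k=\tfrac12\Gnp+\tfrac12\Gnpk$. Let $E_0$ be the event that $G$ is drawn from the null component and $\beta_{k-2}(\skel_{k-2}X(G))=0$. Let $E_1$ be the event that $G$ is drawn from the planted component and $\beta_{k-2}(\skel_{k-2}X(G))\ge 1$. By the two lemmas, $\Pr[E_0\mid\text{null}]\to1$ and $\Pr[E_1\mid\text{planted}]\to1$. Hence $\Pr[E_0\cup E_1]\ge 1-\tfrac12 o(1)-\tfrac12 o(1)\to1$, which gives the required w.h.p.\ statement over the distribution defining $\textsc{PClique}(n,k)$.

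On $E_0\cup E_1$, both biconditionals follow by case analysis. Since $\beta_{k-2}$ is a nonnegative integer, the conditions ``$\beta_{k-2}\ge1$'' and ``$\beta_{k-2}=0$'' are complementary, and so are ``planted'' and ``null.'' It therefore suffices to check one direction of each implication. On $E_1$ the graph is planted and has $\beta\ge1$; on $E_0$ it is null and has $\beta=0$. So on this event the source is planted exactly when $\beta\ge1$, and null exactly when $\beta=0$.

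The main point of care is interpreting ``w.h.p.\ over the distribution.'' The statement is not that $\Gnp$ and $\Gnpk$ have disjoint supports; they do not. It is that the random pair (source label, $G$) satisfies the equivalence with probability $1-o(1)$. I would make the label explicit as a Bernoulli$(\tfrac12)$ bit $b$ and state the equivalence as holding outside an exceptional set of probability $o(1)$.

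I would also check that Lemma~\ref{lem:plant} needs no extra input. The $k$ facets of the planted simplex always form a nonzero cycle in $\ker\partial_{k-2}$: the alternating sum of the facets has zero boundary because $\partial\partial=0$ applied to $\sigma^{k-1}$. Since there are no $(k-1)$-simplices in the skeleton, this cycle is not a boundary. So in fact $\beta_{k-2}\ge1$ holds deterministically for every planted graph, and only the null side needs the first-moment bound of Lemma~\ref{lem:null}.
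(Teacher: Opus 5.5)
Your proposal is correct and takes the same route as the paper, whose proof simply declares the lemma immediate from Lemmas~\ref{lem:null} and~\ref{lem:plant}; your explicit label-bit and union-bound bookkeeping over the balanced mixture is a careful write-up of that step. Your extra observation is valid and a genuine simplification of Lemma~\ref{lem:plant}: the alternating sum of the $k$ plant facets is always a nonzero cycle, and the top skeleton has no $(k-1)$-simplices to fill it, so $\beta_{k-2}\ge 1$ holds deterministically on $\Gnpk$ without controlling other $(k-1)$-cliques that partially overlap the plant.
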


\begin{proof}
Immediate from Lemmas~\ref{lem:null} and~\ref{lem:plant}: the value is $0$ in the null case
and equal to $1$ in the planted case, each w.h.p.
\end{proof}

\begin{theorem*}~\ref{thm:main} [Average-case hardness of Betti number estimation]

Let $k=k(n)$ satisfy the inequality $(2+\epsilon)\log{n} < k<n^{1/2-\epsilon}$, and set $d=k-2$. For every
constant $\delta>0$, the following statements hold.
Under Assumption~\ref{ass:pc_classical}, no randomized polynomial-time
algorithm estimates
\[
\betti_d\bigl(\skel_d X(G)\bigr)
\]
to additive error below $\tfrac12$ with success probability at least
$\tfrac12+\delta$ over $G\sim\mathcal{D}_k$.
Under Assumption~\ref{ass:pc}, no quantum polynomial-time algorithm with a
classical output estimate satisfies the same guarantee.
\end{theorem*}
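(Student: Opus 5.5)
The proof is a contrapositive reduction that leaves the instance unchanged. Suppose some (classical randomized, resp.\ quantum) polynomial-time algorithm $A$, given $G$ and $d=k-2$, outputs $\widehat\beta$ with $|\widehat\beta-\beta_d(\skel_d X(G))|<\tfrac12$ with probability at least $\tfrac12+\delta$ over $G\sim\mathcal{D}_k$ and the internal randomness of $A$. We build a distinguisher $D$ for $\textsc{PClique}(n,k)$. On input $G$, $D$ runs $A(G,k-2)$ once, outputs ``planted'' if $\widehat\beta\ge\tfrac12$ and ``null'' otherwise. Then $D$ runs in polynomial time, and it is quantum exactly when $A$ is. By Lemma~\ref{lem:null} and Lemma~\ref{lem:plant}, the true value is an integer that is $0$ in the null case and $\ge1$ in the planted case, outside bad events $E_0,E_1$ of probability $o(1)$. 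Whenever $A$ is accurate and the instance is good, the threshold at $\tfrac12$ decides correctly: $\beta=0$ forces $\widehat\beta<\tfrac12$, and $\beta\ge1$ forces $\widehat\beta>\tfrac12$.

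Next comes the advantage computation. Write $p_0=\Pr_{G\sim\Gnp}[D \text{ says null}]$ and $p_1=\Pr_{G\sim\Gnpk}[D\text{ says planted}]$. Since $\mathcal{D}_k$ is the balanced mixture,
\[
\tfrac12(p_0+p_1)\;\ge\;\Pr_{\mathcal{D}_k}[A\text{ accurate}]-\tfrac12\Pr[E_0]-\tfrac12\Pr[E_1]\;\ge\;\tfrac12+\delta-o(1).
\]
Hence the distinguishing advantage is
\[
p_1-(1-p_0)=p_0+p_1-1\;\ge\;2\delta-o(1)\;\ge\;\delta
\]
for all large $n$. This advantage is constant, so it contradicts Assumption~\ref{ass:pc_classical} in the classical case and Assumption~\ref{ass:pc} in the quantum case. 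The window $(2+\epsilon)\log n<k<n^{1/2-\epsilon}$ lies inside the conjecture's window, and the lower bound on $k$ is exactly what Lemmas~\ref{lem:null} and~\ref{lem:plant} need. In the quantum case we use only that $A$ returns a classical number, so thresholding it is a classical postprocessing step.

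The main obstacle is making the planted lemma airtight in the form used here, namely $\beta\ge1$ regardless of extra cliques. We need this for the full skeleton, without relying on the earlier claim that the plant's $k$ facets are the only $(k-2)$-simplices. Since $\skel_d$ has no $(d+1)$-faces, \eqref{eq:top-skeleton} gives $\beta_d=\dim\ker\partial_d$. The chain $\partial_{k-1}\sigma^{k-1}$, the signed sum of the $k$ facets of the planted simplex, is a nonzero $d$-chain in $\skel_d X(G)$. It satisfies $\partial_d\partial_{k-1}\sigma=0$, so it lies in the kernel, and therefore $\beta_d\ge1$ deterministically in the planted case. Thus $E_1$ is in fact empty. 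The only probabilistic input is the null case, which is Lemma~\ref{lem:null}'s first-moment bound with error $n^{-\Omega(\epsilon k)}=o(1)$. I would double-check that this bound absorbs the $(2+\epsilon)$ versus $\log_2$ conventions. With the $o(1)$ terms controlled, the constant-advantage contradiction goes through for every fixed $\delta>0$.
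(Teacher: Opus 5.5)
Your proposal is correct and follows the paper's proof essentially step for step. It uses the same detector, which runs $\mathcal{A}$ once on the unchanged instance and thresholds at $\tfrac12$, and the same advantage bookkeeping $p_0+p_1-1\ge 2\delta-o(1)\ge\delta$ (the paper's $s_0+s_1\ge 1+2\delta$ with bad-event slack $\eta$). Your observation that the signed boundary of the planted simplex is a nonzero cycle in the top skeleton, so $\beta_{k-2}\ge 1$ holds deterministically in the planted case and $E_1=\emptyset$, slightly tightens the paper's w.h.p.\ Lemma~\ref{lem:plant} without changing the argument.
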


\begin{proof}
Suppose an efficient $\mathcal{A}$ solves $\textsc{Betti}(k-2)$ to additive error $<\tfrac12$
with success probability $\ge \tfrac12+\delta$ over $D_k$. Define a detector for
$\textsc{PClique}(n,k)$: on input $G$, form $skel_{k-2}X(G)$, run $\mathcal{A}$, and output
\textsc{planted} iff $\mathcal{A}(skel_{k-2}X(G))\ge \tfrac12$.

Write $s_0=\Pr_{G\sim G(n,\frac12)}[\mathcal{A}\text{ succeeds}]$ and
$s_1=\Pr_{G\sim G(n,\frac12,k)}[\mathcal{A}\text{ succeeds}]$. Since $D_k$ is the balanced
mixture, the hypothesis says $\tfrac12(s_0+s_1)\ge\tfrac12+\delta$, i.e.
\begin{equation}\label{eq:sum-form}
  s_0+s_1 \;\ge\; 1+2\delta .
\end{equation}
Note that \eqref{eq:sum-form} bounds neither $s_0$ nor
$s_1$ individually. Let $\eta=\eta(n)=o(1)$ bound the probability of the bad events of
Lemmas~\ref{lem:null} and~\ref{lem:plant}.

\emph{Planted case.} Except with probability $\eta$ we have $\beta_{k-2}\ge 1$; on the success
event of $\mathcal{A}$ the additive error $<\tfrac12$ forces $\mathcal{A}(G,k-2)>\tfrac12$, so
\[
  \Pr_{G\sim G(n,\frac12,k)}\bigl[\text{output \textsc{planted}}\bigr]\;\ge\; s_1-\eta .
\]

\emph{Null case.} Except with probability $\eta$ we have $\beta_{k-2}=0$; on the success event
the estimate lies below $\tfrac12$, so the detector outputs \textsc{null}. Hence
\[
  \Pr_{G\sim G(n,\frac12)}\bigl[\text{output \textsc{planted}}\bigr]\;\le\; (1-s_0)+\eta .
\]

Subtracting and applying \eqref{eq:sum-form}, the advantage is at least
\[
  \bigl(s_1-\eta\bigr)-\bigl(1-s_0+\eta\bigr)\;=\;(s_0+s_1)-1-2\eta\;\ge\;2\delta-2\eta\;\ge\;\delta
\]
for all large $n$, since $\delta>0$ is constant and $\eta=o(1)$. This contradicts
Assumptions~\ref{ass:pc_classical} and~\ref{ass:pc} in the classical and quantum settings respectively.
\end{proof}
\begin{remark}[Why the distribution is hard]
Neither component of $\mathcal{D}_k$ is hard alone: by Lemma~\ref{lem:null} the constant algorithm
``output $0$'' solves $\mathrm{BETTI}(k-2)$ exactly with probability $1-o(1)$ over
$G(n,\tfrac12)$, and by Lemma~\ref{lem:plant} ``output $1$'' does so over
$G(n,\tfrac12,k)$. The hardness is that no single algorithm can handle the input distribution $\mathcal{D}_k$. 
\end{remark}

\begin{remark}[Optimality of the constant-advantage form for the decision problem.]
The constant $\delta>0$ in Theorem~\ref{thm:main} cannot be improved to an inverse
polynomial: counting edges and thresholding yields an efficient algorithm that estimates
$\beta_{k-2}$ to additive error $<\tfrac12$ with success probability
$\tfrac12+\Theta(k^2/n)$ on each distribution, and $k^2/n=n^{-2\alpha}$ for
$k=n^{1/2-\alpha}$. Thus Theorem~\ref{thm:main} is stated at the strongest
success-probability threshold consistent with this trivial distinguisher; ruling out
advantage $\gg k^2/n$ is possible under the optimal decision form of \textup{PC}, but
ruling out all inverse-polynomial advantage is false.
\end{remark}

\begin{remark}[Relation to worst-case results]
Deciding whether $H_d$ of a clique complex vanishes is \QMA$_1$-hard, and computing Betti
numbers is \#\P-hard (\NP-hard for approximations), in the worst case. Theorem~\ref{thm:main} is a new \textit{average-case} result on the distribution $\mathcal{D}_k$.
\end{remark}

\subsection{Reduction for different values of \texorpdfstring{$d$}{d}}
In the previous section we have described a reduction which looks at the $d$-skeleton for a particular choice of $d$, namely $d=k-2$. In principle, we can use skeleta of the simplical complex in other dimensions too. The argument, however, becomes more nuanced. The closer $d$ is to the bottom of the hard PC range, i.e. $d\rightarrow (2+\epsilon)\log{n}$, the Betti number signal $\betti_d$ from the planted clique becomes stronger because a d-skeleton of a $k$-clique, adds $\binom{k-1}{d+1}$ non-trivial homology classes in dimension $d$. However, even though the signal for the choice of smaller $d$ increases, the background noise of the non-planted $d-cliques$ coming from the $\Gnpk$ distribution also increases as $d$ decreases. Even though for the complexity theoretic purposes the reduction carries through for a simple choice of $d=k-2$, it might be instructive to look at the mapping for different choices of the skeleton dimension.

Markov's inequality assures us that when we look at the $d$-skeleton of the complex $X(G)$ then the probability of a $C=d-1$ clique existing in the graph $\Gnp$ is \begin{equation}
    P(C>a)\leq \frac{2\log{n}}{a}
\end{equation} 
thus for any $d$-clique such that $\frac{d}{\log{n}}\rightarrow \infty$, we can set $a=d$ and have an assurance that no $d$-clique exists w.h.p. (probability $1$ as $n$ goes to infinity). Thus the reduction of theorem~\ref{thm:main} applies for every such $d$, and the planted Betti number signal (change in the number of distinct homology classes in dimension $d$) gets stronger at smaller $d$.  
\subsection{Reduction for arbitrary constant \texorpdfstring{$p$}{p}}
The average-case planted clique hardness holds for every constant value of parameter $p$~\cite{HiraharaShimizu2024}.
Thus for any distribution $\mathcal{D}_k$ with $p=const$, the average-case hardness of planted clique transfers to Betti number estimation on clique complexes from $\mathcal{D}_k$.

\section{Algorithmic implications}
\label{sec:algIMPL}
Because we have established the hardness of Betti number estimation on random clique complexes, it is natural to ask
which of the ingredients that make quantum algorithms efficient are present, and
which are not. 
We organise this section around the three
promises under which the LGZ quantum algorithm is efficient, recalled from the
introduction:
\begin{enumerate}
  \item[(P1)] the complex is clique-dense, so the uniform simplex state
        $\lvert\psi_d\rangle\propto\sum_{\sigma\in X_d(G)}\lvert\sigma\rangle$ can be prepared
        in polynomial time;
  \item[(P2)] the combinatorial Laplacian $\Delta_d$ has an at least
        polynomially small spectral gap, so $1/\mathrm{gap}=\mathrm{poly}(n)$;
  \item[(P3)] the normalized Betti number $\beta_d/f_d$ is not exponentially
        small.
\end{enumerate}

An advantage requires a family that is quantumly easy while
classically hard, hence, by the previous paragraphs and by~\cite{Apers2022},
a family on which (P1), (P2) and (P3) hold simultaneously at a growing
dimension, since at bounded constant dimension the classical estimator already matches
the quantum one. The Erd\H{o}s--R\'enyi model cannot supply this: at growing
dimension its clique density is small, (P1) fails (see proof of \ref{lem:null}), and
neither algorithm is efficient. Any candidate family must therefore lie outside the distribution $\mathcal{D}_k$, and it must in addition be classically hard for
a reason other than a hidden planted clique, because the planted structure, we
have shown, is equally opaque to quantum algorithms. Concretely, a candidate
family should satisfy the promises P1,P2,P3 and be classically hard.
The gapped clique-homology instances on weighted graphs of King and
Kohler~\cite{KingKohler2023}, which are QMA$_1$-hard, are contained in QMA and come with
an inverse-polynomial promise gap, are the natural structural template for
P2, P3 and classical hardness; what is missing is a natural distribution over such instances that satisfies P1. In this light Theorem~\ref{thm:main} functions as a
filter on proposed advantage regimes: it removes the entire
family $\mathcal{D}_k \;=\; \tfrac12\,G(n,p)\;+\;\tfrac12\,G(n,p,k)$ for any constant $p$ and, with it, any average-case advantage argument that would draw its classical
hardness from planted-clique-type hidden structure. It remains entirely plausible that there exists a natural family of instances which exhibits quantum advantage, and the results obtained in this work are a stepping stone that clarifies the complexity landscape and makes it clearer what kind of instances might exhibit quantum advantage.

\section{Identifying regions with average-case quantum advantage}
\label{sec:identifyadvantage}

For the clique complex $X(G(n,p))$ with $p=n^{-\alpha}$, Kahle's theorems\cite{Kahle2009} place
the $d$-th homology in a narrow band of densities:
\begin{equation}
\beta_d\bigl(X(G(n,p))\bigr)>0 \ \text{w.h.p.}
\qquad\Longleftrightarrow\qquad
\tfrac{1}{\,d+1\,}<\alpha<\tfrac1d ,
\label{eq:kahle-window}
\end{equation}
with $\beta_d=0$ below the window (too sparse to support $d$-cycles) and above
it (dense enough that $(d+1)$-faces fill every $d$-cycle). Two features of the
window matter here.
In particular, the claim from \cite{Kahle2009} is even stronger, namely in the non-trivial region, the non-trivial homology is close to maximal, i.e. $\frac{\betti_d}{f_d}=1$ w.h.p. Even more interestingly, we can show the non-trivial homology regime falls outside of the average-case hard region and our reduction breaks down also due to almost maximal Betti number in the Kahle window (no planted signal can be detected), possibly indicating a quantum advantage regime (subject to no efficient classical algorithms).

\subsection{Non-trivial homology in random clique-complexes}
\label{sec:nontrivhomo}

Write $p=n^{-\alpha}$ and let $X(G(n,p))$ be the random clique (flag) complex,
with $f_d$ the number of its $d$-simplices, so
$\mathbb{E}f_d=\binom{n}{d+1}p^{\binom{d+1}{2}}$. Kahle's theorems locate the
homology of $X(G(n,p))$ in a single band of densities, and show it is empty
outside that band.
 
\begin{theorem}[Kahle, concentration of homology in one dimension \cite{Kahle2009}]
\label{thm:kahle-window}
For any $d\ge 1$ and let $p=n^{-\alpha}$ with
\[
\frac{1}{d+1}<\alpha<\frac1d .
\]
Then w.h.p.\ $\widetilde H_i\bigl(X(G(n,p))\bigr)=0$ for every $i\ne d$, and
$\beta_d\bigl(X(G(n,p))\bigr)>0$.
\end{theorem}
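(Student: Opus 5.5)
The nonvanishing of $\beta_d$ follows from counting faces, while the vanishing for $i\neq d$ splits into a dense side ($i<d$) and a sparse side ($i>d$). We treat the three parts in that order.

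\emph{Nonvanishing of $\beta_d$.} Write $\beta_d=f_d-\rank\partial_d-\rank\partial_{d+1}$. Since $\rank\partial_d\le f_{d-1}$ and $\rank\partial_{d+1}\le f_{d+1}$, this gives the weak Morse-type bound
\[
\beta_d\;\ge\; f_d-f_{d-1}-f_{d+1}.
\]
Using $\mathbb{E}f_i=\binom{n}{i+1}p^{\binom{i+1}{2}}$ with $p=n^{-\alpha}$, the ratios of expectations are
\[
\frac{\mathbb{E}f_{d+1}}{\mathbb{E}f_d}\asymp n\,p^{d+1}=n^{1-\alpha(d+1)}\to0,
\qquad
\frac{\mathbb{E}f_{d-1}}{\mathbb{E}f_d}\asymp (n p^{d})^{-1}=n^{\alpha d-1}\to0 .
\]
The first limit uses $\alpha>\tfrac1{d+1}$ and the second uses $\alpha<\tfrac1d$. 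Moreover $\mathbb{E}f_d\to\infty$ throughout the window. To pass from expectations to the random variables, I would run a second-moment (Chebyshev) computation for $f_d$ and $f_{d\pm1}$. The variance is controlled by summing over pairs of cliques sharing $j$ vertices, and each overlap term is $o((\mathbb{E}f)^2)$ in this range. This gives $f_i=(1+o(1))\mathbb{E}f_i$ w.h.p., hence $\beta_d\ge(1-o(1))f_d>0$. It also yields the "almost maximal" statement $\beta_d/f_d\to1$ quoted before Theorem~\ref{thm:kahle-window}.

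\emph{Vanishing for $i<d$ (dense side).} Here $\alpha<\tfrac1d\le\tfrac1{i+1}$, so the graph is dense relative to dimension $i$. First I would try a coning argument. A union bound shows that w.h.p.\ every set of at most $m$ vertices has a common neighbour whenever $n\,p^{m}\gg\log n$. If every $i$-cycle can be supported on a set with a common neighbour $v$, then that cycle lies in the cone from $v$ and is a boundary. This works directly only for $\alpha$ roughly below $\tfrac1{2i+1}$, which is Kahle's original cruder range. For the sharp range I would instead use Garland's method. One shows that for every $(i-1)$-face $\tau$, the link of $\tau$ is again a random flag complex that is dense enough for its normalized graph Laplacian to have spectral gap $>1-\tfrac1{i+1}$. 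The gap comes from random-graph eigenvalue concentration in $G(N,p)$ with $N\approx np^{i}\to\infty$. Garland's theorem then gives $H^i=0$ over $\mathbb{R}$, uniformly over all faces via a union bound.

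\emph{Vanishing for $i>d$ (sparse side).} Here $\alpha>\tfrac1{d+1}$. I would show that every nontrivial $i$-cycle requires a subgraph whose expected number of copies tends to $0$. In a flag complex a nonzero $i$-cycle needs at least $2i+2$ vertices, with the cross-polytope as the extremal case, and every vertex of its support has degree large in terms of $i$. The plan is a first-moment bound over such minimal supporting structures. For unbounded support size, I would instead use a collapsing argument: w.h.p.\ every face of dimension $>d$ has a free face, so the complex collapses to dimension $d$ plus pieces with no $i$-cycles.

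The main obstacle is the vanishing for $i\ne d$ at the sharp thresholds, especially the uniform spectral-gap estimate needed for Garland's method on the dense side. I expect to lean on \cite{Kahle2009,Kahle2014} for those steps rather than reprove them in full.
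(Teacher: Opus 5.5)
Your nonvanishing step is correct: $\beta_d\ge f_d-f_{d-1}-f_{d+1}$, a second moment for $f_d$, and Markov suffices for $f_{d\pm1}$. Your dense-side plan, Garland's criterion on links of $(i-1)$-faces in the w.h.p.\ pure $(i+1)$-skeleton, is also correct. The paper gives no argument beyond citing \cite{Kahle2009}, and your dense side genuinely needs \cite{Kahle2014}: \cite{Kahle2009} alone gives vanishing of $H_i$ only for $\alpha<1/(2i+1)$, which misses the whole window whenever $d/2\le i<d$ (e.g.\ $d=2$, $i=1$).

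\textbf{First gap, sparse side $i>d$: the minimal-support count.} The claim that a nonzero $i$-cycle in a flag complex needs $2i+2$ vertices is false. The boundary of an $(i+1)$-simplex is a nonzero $i$-cycle on $i+2$ vertices. For $i=d+1$ and $d\ge2$ these are abundant, since $\mathbb{E}f_{i+1}\to\infty$ throughout the window. Nor need vertices in the support of a non-bounding cycle have high degree, because you can add any bounding cycle to it. What you should count is a vertex set $W$ that is minimal with $\widetilde H_i(X[W])\neq0$. For $w\in W$, minimality gives $\widetilde H_i(X[W\setminus w])=0$, and the star of $w$ is a cone. So Mayer--Vietoris for $X[W]=X[W\setminus w]\cup\mathrm{st}(w)$ embeds $\widetilde H_i(X[W])$ into $\widetilde H_{i-1}(\mathrm{lk}_{X[W]}w)$. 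By the same argument inductively, a flag complex with $\widetilde H_{i-1}\neq0$ has at least $2i$ vertices, so $w$ has at least $2i$ neighbours in $W$. Hence $e(W)\ge i|W|$. The expected number of such $W$ of fixed size $v$ is then $O(n^{v(1-\alpha i)})\to0$, since $\alpha i>1$.

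\textbf{Second gap: unbounded supports.} The minimal-support count handles only bounded $v$, and your collapsing step fails as stated. It is not true that w.h.p.\ every maximal face of dimension $>d$ has a free facet. Consider maximal $(d+1)$-faces each of whose $d+2$ facets lies in a second $(d+1)$-face through distinct extra vertices. Their expected number is of order $\mathbb{E}f_{d+1}(np^{d+1})^{d+2}=n^{(d+2)(2-\frac32\alpha(d+1))}$. This diverges whenever $\alpha(d+1)<\tfrac43$, and a routine second-moment check shows such faces occur w.h.p. Whether collapses can still be sequenced down to dimension $d$ is a separate and harder question that you do not need.

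\textbf{The fix: bound strong components instead.} Call two $i$-faces adjacent if they share an $(i-1)$-face, and list a component in connected order. Each new face adds at most one new vertex, and when it does it adds $i$ new edges. So a prefix spanning exactly $M$ vertices has expected count $O\bigl(n^{(i+1)(1-\alpha i/2)-(M-i-1)(\alpha i-1)}\bigr)$, which tends to $0$ for a large constant $M$. Every $(i-1)$-face lies in $i$-faces of only one component, so any $i$-cycle splits into cycles each supported in some $X[W]$ with $|W|<M$. By the minimal-support count, w.h.p.\ no such $W$ has $\widetilde H_i(X[W])\neq0$. Each piece therefore bounds, giving $H_i(X)=0$ for every $i>d$.
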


\begin{theorem}[Kahle, vanishing outside the window \cite{Kahle2009,Kahle2014}]
\label{thm:kahle-vanish}
For any $d\ge 1$ and let $p=n^{-\alpha}$. If $\alpha>\frac1d$ (too sparse
to support $d$-cycles) or $\alpha<\frac{1}{d+1}$ (dense enough that
$(d{+}1)$-faces fill every $d$-cycle), then $\beta_d\bigl(X(G(n,p))\bigr)=0$
w.h.p.
\end{theorem}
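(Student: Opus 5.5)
The statement splits into two regimes. I would treat them separately: a first-moment (counting) argument when $\alpha>\frac1d$, and a spectral, Garland-type argument when $\alpha<\frac{1}{d+1}$. In both cases the starting point is that for $p=n^{-\alpha}$ the count $\mathbb{E}f_j=\binom{n}{j+1}p^{\binom{j+1}{2}}$ from \eqref{eq:expected-faces} behaves like $n^{(j+1)(1-\alpha j/2)}$. This is not by itself decisive, since $d$-simplices may well exist in both regimes, so the proof has to look at the \emph{structure} that a nontrivial $d$-class requires.

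\emph{Sparse regime $\alpha>\frac1d$.} The first aim is to show that w.h.p.\ no subcomplex of $X(G(n,p))$ supports a nonzero $d$-cycle. Since $\skel_d$ is top-dimensional here, a $d$-cycle is an element of $\ker\partial_d$ as in \eqref{eq:top-skeleton}. I would take a minimal-support $d$-cycle, say on $v$ vertices spanning $e$ edges of $G$, and prove a deterministic density lemma for flag complexes: $e\ge d\,v/2$, i.e.\ the induced graph has minimum-degree-type density at least that of the octahedral sphere (the boundary of the $(d+1)$-dimensional cross-polytope). The extremal case is the cross-polytope itself, with $v=2d+2$ and $e=\binom{2d+2}{2}-(d+1)=2d(d+1)$. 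The route I have in mind is that every vertex of a minimal $d$-cycle in a flag complex has a link that is itself a $(d-1)$-cycle, which by induction on $d$ has at least $2d$ vertices; this gives minimum degree $\ge 2d$. Then a union bound over vertex sets of size $v\le V$ gives an expected count $\lesssim n^{v}p^{dv}=n^{v(1-\alpha d)}\to0$. Large $v$ is handled by a separate bound showing that w.h.p.\ all $d$-dimensional "strongly connected" pieces have bounded size, using the same exponent $1-\alpha d<0$ once the pieces grow. This exponent computation is where $\alpha>\frac1d$ enters.

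\emph{Dense regime $\alpha<\frac{1}{d+1}$.} Here I would invoke Garland's method in Kahle's form. If for every $(d-1)$-face $\tau$ the 1-skeleton of $\mathrm{lk}(\tau)$ is connected and its normalized graph Laplacian has smallest nonzero eigenvalue $>1-\frac{1}{d+1}$, then $H^d(X;\mathbb{Q})=0$, hence $\beta_d=0$ over $\mathbb{R}$. In a flag complex, the link of a $(d-1)$-face $\tau$ is the flag complex of the common neighbourhood of its $d$ vertices. That common neighbourhood is distributed as $G(m,p)$ with $m\approx np^{d}$, and the required condition is satisfied as soon as $mp=np^{d+1}\ge C\log n$, which holds comfortably when $\alpha<\frac{1}{d+1}$. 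The needed spectral-gap bound for $G(m,p)$ is that the normalized Laplacian has all nonzero eigenvalues in $1\pm O(1/\sqrt{mp})$, with failure probability $n^{-\omega(d)}$. A union bound over the at most $n^{d}$ faces $\tau$ then finishes, after conditioning on $m$ being concentrated via Chernoff.

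\emph{Main obstacle.} I expect the delicate step to be the quantitative part of the dense regime. The spectral-gap tail bound must beat the $n^{d}$-fold union bound, and the links are not independent $G(m,p)$'s once $\tau$ is fixed, so I would first expose the neighbourhoods of the vertices of $\tau$ and only then the edges inside the common neighbourhood. In the sparse regime, the corresponding obstacle is making the density lemma uniform in the cycle size $v$, which is why I would pair the fixed-$v$ first-moment estimate with the separate bounded-component argument rather than rely on a single union bound.
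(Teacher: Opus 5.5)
Your dense-regime half is essentially sound. It is the Garland-method route of Kahle's 2014 paper; the paper under review only cites Kahle and gives no proof of its own. Two details to add: the pure-complex form of Garland's criterion needs the $(d+1)$-skeleton to be pure, and this holds w.h.p.\ because every set of at most $d+1$ vertices has at least about $n^{1-\alpha(d+1)}\to\infty$ common neighbours. Also, you only need the nonzero link spectrum within the constant $\frac1{d+1}$ of $1$, so crude concentration with failure probability $\exp(-n^{\Omega(1)})$ already beats the $n^{d}$ union bound.

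The sparse regime, however, has a genuine error. The statement concerns $\betti_d$ of the full flag complex $X(G(n,p))$, not $\dim\ker\partial_d$ of $\skel_d X$ as in \eqref{eq:top-skeleton}. Your first aim, that w.h.p.\ no nonzero $d$-cycle exists, is false for every $d\ge 2$. On the nonempty interval $\frac1d<\alpha<\frac{2}{d+1}$, the expected number of copies of $K_{d+2}$ is $n^{(d+2)(1-\alpha(d+1)/2)}\to\infty$. Since $K_{d+2}$ is strictly balanced, $G(n,n^{-\alpha})$ contains one w.h.p., and it carries the nonzero $d$-cycle $\partial\Delta^{d+1}$. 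That cycle is a boundary in $X$ but not in $\skel_d X$.

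The same example refutes your density lemma. $\partial\Delta^{d+1}$ is a minimal-support $d$-cycle on $d+2$ vertices, each of degree $d+1<2d$. Its vertex links are copies of $\partial\Delta^{d}$, which are nonzero $(d-1)$-cycles on only $d+1<2d$ vertices. So the inductive claim that every nonzero $(d-1)$-cycle in a flag complex has at least $2d$ vertices is wrong. Separately, your $e\ge dv/2$ should read $e\ge dv$; with $dv/2$ the first-moment exponent $v(1-\alpha d/2)$ is positive whenever $\alpha<2/d$.

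The repair is to minimise over \emph{non-bounding} cycles rather than over all nonzero cycles.
\begin{itemize}
\item Let $z$ be a $d$-cycle that is not a boundary in $X$, with vertex set $S$ of minimum size. For $v\in S$ write $z=v*c+z'$ with $z'$ avoiding $v$.
\item Then $c$ is a reduced $(d-1)$-cycle of $X(G[N(v)\cap S])$. If $c=\partial b$ there, then $z+\partial(v*b)=z'+b$ is a homologous, hence non-bounding, cycle avoiding $v$, which contradicts minimality.
\item Hence $\tilde H_{d-1}(X(G[N(v)\cap S]))\neq 0$. The correct induction is that $\tilde H_j(X(H))\neq0$ forces $|V(H)|\ge 2j+2$. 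Its inductive step uses in addition that no vertex of $S$ is adjacent to all the others, since otherwise $X(G[S])$ is a cone and $z$ would bound.
\item This gives $\deg_{G[S]}(v)\ge 2d$ for every $v\in S$, and therefore $e(G[S])\ge d|S|$.
\end{itemize}

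Bounded size is essential here. For $\alpha<1$ the whole graph $G(n,n^{-\alpha})$ has minimum degree far above $2d$, so a density condition alone rules out nothing. Your strongly-connected-component bound supplies the bound once it is adapted to the corrected framework:
\begin{itemize}
\item The $d$-faces containing a fixed $(d-1)$-face all lie in one component, so each restriction $z|_C$ is a cycle.
\item At least one restriction $z|_C$ is non-bounding, and by minimality its vertex set is all of $S$.
\item So $S$ lies inside a single component, which w.h.p.\ spans at most $V_0$ vertices.
\end{itemize}
The first-moment bound $\sum_{v\le V_0}\binom{n}{v}2^{\binom{v}{2}}p^{dv}\to 0$ then completes the sparse case.
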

 
Theorems~\ref{thm:kahle-window} and~\ref{thm:kahle-vanish} give the equivalence
\begin{equation}
\beta_d\bigl(X(G(n,p))\bigr)>0 \ \text{w.h.p.}
\qquad\Longleftrightarrow\qquad
\tfrac{1}{\,d+1\,}<\alpha<\tfrac1d .
\label{eq:kahle-result}
\end{equation}
Inside the window the statement is quantitatively stronger: the homology is not
merely non-zero but close to maximal, a sharpening already implicit in Kahle's
Euler-characteristic computation, which admits a short direct proof~\cite{Kahle2014}.
 
\begin{lemma}[Near-maximal homology in the window (Theorem 3.8 from \cite{Kahle2009})]
\label{lem:max-betti}
If $\alpha$ satisfies \eqref{eq:kahle-window}, then w.h.p.
\begin{equation}
\beta_d\bigl(X(G(n,p))\bigr)=(1-o(1))f_d.
\label{eq:max-betti}
\end{equation}
\end{lemma}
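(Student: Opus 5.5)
The natural route is the Euler characteristic, using Theorem~\ref{thm:kahle-window} to kill every homology group except the $d$-th. For a finite complex, $\chi(X)=\sum_i(-1)^i f_i=\sum_i(-1)^i\beta_i$. Inside the window, Theorem~\ref{thm:kahle-window} gives $\widetilde H_i=0$ for all $i\ne d$ w.h.p. Hence w.h.p.
\[
(-1)^d\beta_d=\chi\bigl(X(G(n,p))\bigr)-1=\sum_{i\ge0}(-1)^i f_i-1 .
\]
It therefore suffices to show that the face counts are dominated by $f_d$: w.h.p.\ $f_i=o(f_d)$ for every $i\ne d$, with the total over $i\ne d$ also $o(f_d)$. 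Then $\beta_d=f_d\pm o(f_d)$, i.e.\ $\beta_d=(1-o(1))f_d$.

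\textbf{Step 1: expected face counts.} We have $\mathbb{E}f_i=\binom{n}{i+1}p^{\binom{i+1}{2}}\asymp n^{(i+1)-\alpha\binom{i+1}{2}}$. The exponent $g(i)=(i+1)\bigl(1-\alpha i/2\bigr)$ is a concave quadratic in $i$. Consecutive ratios are
\[
\frac{\mathbb{E}f_{i+1}}{\mathbb{E}f_i}\asymp n^{1-\alpha(i+1)} .
\]
This ratio is $n^{\Omega(1)}$ for $i+1<1/\alpha$ and $n^{-\Omega(1)}$ for $i+1>1/\alpha$. The window $\tfrac1{d+1}<\alpha<\tfrac1d$ means exactly $d<1/\alpha<d+1$. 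So the ratio is polynomially large up to $i=d$ and polynomially small from $i=d$ on, and $\mathbb{E}f_d$ is the unique maximum with a polynomial margin. Both gaps $1-\alpha d>0$ and $\alpha(d+1)-1>0$ are fixed constants for fixed $d$ and $\alpha$.

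Summing the resulting geometric-type decay on both sides gives $\sum_{i\ne d}\mathbb{E}f_i=O(n^{-c})\,\mathbb{E}f_d$ for some $c>0$. For large $i$, the terms with $i\gg d$ have $g(i)\to-\infty$ quadratically, so the tail is negligible. In particular $\mathbb{E}\bigl[\#\{(j+1)\text{-cliques}, j>j_0\}\bigr]\to0$ for a suitable constant $j_0$.

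\textbf{Step 2: concentration of $f_d$.} Apply Chebyshev via the second moment of the $(d+1)$-clique count. Pairs of cliques sharing $s\ge2$ vertices contribute to the variance. Since $\mathbb{E}f_d\to\infty$ (as $g(d)>0$, which follows from $\alpha d<1$), the standard subgraph-count calculation should give $\mathrm{Var} f_d=o\bigl((\mathbb{E}f_d)^2\bigr)$. This needs every subgraph $K_s$, $s\le d+1$, to have $\mathbb{E}[\#K_s]\to\infty$, which holds because $g(s-1)$ is increasing up to $d$ and positive. Hence $f_d=(1+o(1))\mathbb{E}f_d$ w.h.p.

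For the other dimensions, Markov's inequality applied to $\sum_{i\ne d}f_i$ gives $\sum_{i\ne d}f_i\le n^{c/2}\sum_{i\ne d}\mathbb{E}f_i=o(\mathbb{E}f_d)$ w.h.p. The high dimensions are absorbed by the tail bound, since w.h.p.\ there are no cliques beyond a constant size.

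Combining these with the Euler identity and $1=o(f_d)$ yields the claim.

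\textbf{Main obstacle.} The delicate point is Step 2's variance bound. In particular, one must check that the overlap terms are controlled for all overlap sizes, which uses the strict inequality $\alpha<1/d$ (the lower end of the window is not needed there). A secondary concern is that Theorem~\ref{thm:kahle-window} is used only as a black box for the vanishing of the other homology groups. If one wanted a self-contained argument, the inequalities $\beta_d\le f_d$ and $\beta_d\ge f_d-f_{d-1}-f_{d+1}$ already give the result from Step 1 alone, with no Euler characteristic needed. I would present that version as the primary proof, since it avoids the black box entirely. It follows from $\beta_d\ge\dim\ker\partial_d-\rank\partial_{d+1}\ge(f_d-f_{d-1})-f_{d+1}$.
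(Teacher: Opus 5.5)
Your proposal is correct and follows the route the paper points to. The paper gives no argument of its own beyond citing Kahle's Euler-characteristic computation and noting that $f_d$ dominates its neighbouring face counts; that is exactly your bound $f_d-f_{d-1}-f_{d+1}\le\beta_d\le f_d$ (or your Euler-characteristic identity), and your second-moment concentration of $f_d$ is what turns the comparison of expected face counts into the w.h.p.\ statement. The only slip is the phrase ``from Step 1 alone'': the self-contained version still needs Step 2's lower bound $f_d\ge(1-o(1))\,\mathbb{E}f_d$ and a Markov bound on $f_{d\pm1}$, both of which your proposal already contains.
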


Thus in the window $f_d$ dominates its neighbouring face counts, so almost every
$d$-simplex supports a non-trivial homology class: the normalized Betti number is
$1-o(1)$.

We are also going to show that the reduction presented in section \ref{sec:RED} does not hold in the Kahle's window. It is therefore the natural place
to ask whether an average-case quantum advantage can be demonstrated in that window, since the average-case hardness reduction breaks down. 

\subsection{No reduction in the non-trivial homology window}
\label{sec:no-reduction}

The reduction of Section~\ref{sec:RED} decides planted clique by reading a
$0$-versus-$(\ge 1)$ gap in $\beta_d$, and that gap is manufactured by
\emph{truncating} the complex at $d = k-2$: truncation turns the contractible
planted simplex $\sigma^{k-1}$ into the boundary sphere
$\partial\sigma^{k-1}\cong S^{k-2}$, whose non-trivial homology contributes the signal. Inside Kahle's
window neither ingredient is available. First, the window lives at dimension
$d \approx 1/\alpha$, which for constant $p$ is $\Theta(\log_{1/p} n)$, slightly but sharply
\emph{below} the lower threshold of the hard window of planted clique ($(2+o(1))\log_{1/p} n$). So the empty-null case that the reduction exploits does not occur
here. Secondly, the null-case Betti number is already
near-maximal\cite{Kahle2009}. The following proposition shows in a simple way why the reduction of theorem\ref{thm:main} does not apply to the non-trivial homology region identified by Kahle.

\begin{proposition}[Kahle's window lies below the PC-hard range]
\label{prop:dimension-gap}
Let $p=n^{-\alpha}$ and let $d$ satisfy $\tfrac{1}{d+1}<\alpha<\tfrac1d$, $d>1$. Then
$d+2 < (2+\varepsilon)\log_{1/p} n$ for every $\varepsilon>0$ and $n$ large. Hence no $k$
in the PC window \ref{ass:pc} satisfies $k=d+2$, and Lemma~\ref{lem:null} fails
at every dimension in the window.
\end{proposition}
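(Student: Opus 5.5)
The plan is to rewrite the density in the natural logarithmic scale and then compare linear inequalities in $d$. With $p=n^{-\alpha}$ we have $\log_{1/p} n = 1/\alpha$ exactly, so the window hypothesis $\tfrac{1}{d+1}<\alpha<\tfrac1d$ becomes $d < \log_{1/p} n < d+1$. The claimed inequality $d+2<(2+\varepsilon)\log_{1/p}n$ should then follow from the lower bound $\log_{1/p}n>d$ alone. Here the PC window of Assumption~\ref{ass:pc} is read with $\log_2$ replaced by $\log_{1/p}$, as in the constant-$p$ discussion above, since the relevant threshold is the clique number $(2+o(1))\log_{1/p}n$.

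\textbf{Step 1 (the inequality).} From $\log_{1/p}n>d$ we get
\[
(2+\varepsilon)\log_{1/p}n > (2+\varepsilon)d = 2d+\varepsilon d .
\]
Since $d\ge 2$, we have $2d\ge d+2$. Since $\varepsilon d>0$, it follows that $(2+\varepsilon)\log_{1/p}n> d+2$ strictly. This holds for every $\varepsilon>0$ and every $n$, so the ``$n$ large'' qualifier is not needed. Consequently $k=d+2$ violates the lower end of the PC window, and no admissible $k$ satisfies $k=d+2$. The hypothesis $d>1$ is exactly what makes $2d\ge d+2$ hold; for $d=1$ the argument would need $\varepsilon$ large.

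\textbf{Step 2 (failure of Lemma~\ref{lem:null}).} Lemma~\ref{lem:null} at $k=d+2$ asserts that w.h.p.\ there is no $(d+1)$-clique, i.e.\ $f_d=0$. I will show instead that $f_d>0$ w.h.p. First compute
\[
\mathbb{E} f_d=\binom{n}{d+1}p^{\binom{d+1}{2}} = n^{(d+1)(1-\alpha d/2)+o(1)} .
\]
For fixed $d$, $\binom{n}{d+1}=\Theta(n^{d+1})$, and $\alpha d<1$ makes the exponent at least $(d+1)/2$, so $\mathbb{E}f_d\to\infty$. To upgrade this to $f_d>0$ w.h.p., I would either cite Theorem~\ref{thm:kahle-window}, where $\beta_d>0$ forces $f_d>0$, or give a short second-moment argument over pairs of $(d+1)$-sets sharing $j\ge 2$ vertices. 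The simplest route is Kahle's result, or Lemma~\ref{lem:max-betti}, which gives $\beta_d=(1-o(1))f_d$ with $\beta_d>0$. Either way the null complex is not empty in dimension $d$, so the $0$-versus-$(\ge1)$ dichotomy of Lemma~\ref{lem:indicator} breaks down.

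\textbf{Main obstacle.} The only delicate point is interpretive rather than computational. Assumption~\ref{ass:pc} is stated for $p=\tfrac12$ with $\log_2$, while here $p=n^{-\alpha}$ is not constant. I will state explicitly that the comparison uses the base-$1/p$ clique-number threshold $(2+\varepsilon)\log_{1/p}n$. Under that reading the proposition reduces to the elementary inequality of Step 1, together with the first-moment and Kahle input of Step 2.
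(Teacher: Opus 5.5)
Your argument is correct and is essentially the paper's: both rewrite $\log_{1/p}n=1/\alpha$ and compare the window bound on $d$ with $(2+\varepsilon)/\alpha$; your chain $d+2\le 2d<(2+\varepsilon)d<(2+\varepsilon)/\alpha$ is only a slightly different way of closing the same comparison. Your write-up is in fact more careful than the paper's one-line proof, which attributes $d+2<1/\alpha+2$ to $\alpha>1/(d+1)$ (it actually follows from $\alpha<1/d$, the side you use) and leaves implicit the last step $1/\alpha+2<(2+\varepsilon)/\alpha$ (true for every $\varepsilon>0$ because $\alpha<1/d\le\tfrac12$); your Step~2 also justifies, via Theorem~\ref{thm:kahle-window}, that the conclusion of Lemma~\ref{lem:null} genuinely fails, which the paper does not spell out.
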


\begin{proof}
$\log_{1/p}n = 1/\alpha$, so the PC window requires $k \ge (2+\varepsilon)/\alpha$, while
$\alpha>1/(d+1)$ gives $d+2 < 1/\alpha + 2$.
\end{proof}

\subsection{Region for average-case quantum-advantage}
\label{sec:advantage-region}

Our results, together with the classical algorithm of
\cite{Apers2022} and the efficient state preparation problem\cite{GyurikCadeDunjko2022,SchmidhuberLloyd2023}, partition
the $(d,\alpha)$ landscape of Erd\H{o}s--R\'enyi clique complexes into three
regions, and by elimination we arrive at a single candidate for average-case
quantum advantage.
\begin{itemize}
  \item \textbf{PC-hard region} ($(2+\epsilon)\log{n} < d<n^{1/2-\epsilon}$ ). The null complex is empty w.h.p, $\beta_d = 0$, and the planted clique
        creates the only homology. This is where Theorem~\ref{thm:main} rules out average-case advantage:
        Betti estimation is PC-hard, but it is hard for both classical and
        quantum algorithms and, by Schmidhuber--Lloyd \cite{SchmidhuberLloyd2023}, the invariant
        carries no information on typical inputs. Thus no quantum advantage.
  \item \textbf{Bounded dimension} ($d = O(1)$). The window is nonempty here only
        for $\alpha = \Theta(1)$, and there the normalized Betti number is
        estimated in polynomial time \emph{classically} \cite{Apers2022}. Any
        quantum algorithm is matched by a classical one. Thus quantum advantage is ruled out. 
  \item \textbf{The window at growing dimension}
        ($\tfrac1{d+1} < \alpha < \tfrac1d$, $d\to\infty$). Here the normalized
        Betti number is near-maximal (P3), the gap can be favourable (P2), and,
        by ~\ref{prop:dimension-gap}, the planted-clique reduction
         provably fails. One remaining obstacle is (P1) which causes the quantum state preparation step to be inefficient with known techniques. Nevertheless, the main observation is that the average-case hardness we establish for the problem does not reach into this region. This is the interesting and non-trivial regime our reduction does not affect and the question of average-case quantum advantage remains open and plausible.
\end{itemize}
\section{Implications
}
\label{sec:implications}
In this section we present the corollaries of our reduction from section~\ref{sec:RED}. We work with the quantum hardness assumptions for planted clique (assumptions~\ref{ass:pc},~\ref{ass:SearchStrong}), but all corollaries hold for classical computation if the quantum conjectures are replaced with the more established classical hardness assumptions.

We introduce the following quantum version of the strong search PC conjecture, for the classical version we refer the reader to~\cite{HiraharaShimizu2024}. This will be of use for some of the corollaries.
\begin{assumption}[Quantum version of the strong search conjecture~\cite{HiraharaShimizu2024}]
\label{ass:SearchStrong}
For any constants $\alpha \in (0, 1/2)$ and
$c > 0$, any (classical randomized~\cite{HiraharaShimizu2024})  quantum polynomial-time algorithm fails to find
a $k$-clique in $\Gnpk$ with probability $1-n^{-c}$ (the probability of success is inverse polynomial) for all sufficiently large
$n$ and $k\leq n^{1/2-\alpha}$.
    
\end{assumption}
Usually the worst-case complexity results refer to a random clique complex $X(G)$; average-case results from this work are about the $d$-skeleton of a clique complex from distribution $\mathcal{D}_k$ in the same local-access model.
\subsection{Simple corollaries about simplicial complexes}

\begin{corollary}[Simplex search]\label{cor:simplex}
    
Under the strong search \textup{PC} conjecture~\ref{ass:SearchStrong}, for every inverse polynomial $\delta$ no
efficient algorithm outputs, with probability $\ge \delta$ over
$G\sim\Gnpk$, a single $(k{-}2)$-simplex of $X(G)$.
\end{corollary}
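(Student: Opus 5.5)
Suppose an efficient (classical or quantum) algorithm $\mathcal{A}$ outputs a $(k{-}2)$-simplex of $X(G)$ with probability at least $\delta=1/\mathrm{poly}(n)$ over $G\sim\Gnpk$. The plan is to turn $\mathcal{A}$ into an efficient $k$-clique finder whose success probability is $\delta-o(1)$. That contradicts Assumption~\ref{ass:SearchStrong}.

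The key structural step is that w.h.p.\ every $(k{-}1)$-clique of $G\sim\Gnpk$ lies inside the planted set $S$. I would prove this by a first-moment count. Fix a $(k{-}1)$-set $T$ with $|T\cap S|=j<k-1$. For $T$ to be a clique, all edges of $T$ not inside $S$ must be present, and there are $\binom{k-1}{2}-\binom{j}{2}$ of them. Summing over $j$, the expected number of such $T$ is
\[
\sum_{j=0}^{k-2}\binom{k}{j}\binom{n-k}{k-1-j}\,2^{-\binom{k-1}{2}+\binom{j}{2}} .
\]
The $j=0$ term is the null count from Lemma~\ref{lem:null}, which is $n^{-\Omega(\varepsilon k)}$. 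For general $j$, I would bound the term by $k^{j}\, n^{k-1-j}\, 2^{-(k-1-j)(k-2+j)/2}$. This is at most $\bigl(k\,n\,2^{-(k-2)/2}\bigr)^{k-1-j}$ times harmless factors, where I use $k^j\le k^{k-1-j}\cdot k^{2j-k+1}$ carefully, or simply bound $\binom{k}{j}\le k^{k-1-j}$ via $\binom{k}{j}=\binom{k}{k-j}$. Since $k\ge(2+\varepsilon)\log_2 n$ and $k\le n^{1/2}$, the base is at most $n^{-\varepsilon/2+o(1)}\cdot k$. This is small only if $k2^{-(k-2)/2}n\to0$, i.e.\ if the extra $k$ factor is absorbed.

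This absorption is the main obstacle. The extra factor of $k$ per term is only $2^{\log k}$, and $\log k=O(\log\log n)$ for $k$ near the lower end of the window, so it is absorbed into $n^{o(1)}$ once one uses $(k-2+j)/2\ge (k-2)/2$. For $k$ large (up to $n^{1/2-\varepsilon}$) the decay $2^{-(k-2)/2}$ overwhelms $kn$ trivially. Summing the geometric series over $m=k-1-j\ge1$ then gives $o(1)$. I expect this case analysis to be routine but to need care at the $j=k-2$ end, where $m=1$ and the term is roughly $k^2 n 2^{-(k-2)}\to0$.

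Given this, the clique finder is as follows. Run $\mathcal{A}$ to obtain a candidate $(k{-}1)$-set $T$ and verify it is a clique with $\binom{k-1}{2}$ edge look-ups. If it is, then except on the $o(1)$ bad event $T\subseteq S$, so $S=T\cup\{v\}$ for the unique missing vertex $v$. To find $v$, enumerate all $v\notin T$ adjacent to every vertex of $T$; any such $v$ gives a $k$-clique $T\cup\{v\}$, which is what Assumption~\ref{ass:SearchStrong} forbids us from finding. By a union bound the success probability is at least $\delta-o(1)$, which is still inverse polynomial. This contradicts Assumption~\ref{ass:SearchStrong} for $k\le n^{1/2-\alpha}$. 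The same argument works verbatim for classical algorithms under the classical strong search conjecture of~\cite{HiraharaShimizu2024}.
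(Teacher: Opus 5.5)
Your reduction is the paper's: a returned $(k{-}2)$-simplex is a $(k{-}1)$-clique, which w.h.p.\ lies inside the plant $S$. Your count over $j=|T\cap S|$ and your completion step $T\cup\{v\}$ also fill in two steps that the paper's one-line proof skips.
\begin{itemize}
\item Lemma~\ref{lem:plant} relies on Lemma~\ref{lem:null}. That lemma only concerns cliques of $G_0$, not cliques that mix planted and random edges.
\item Assumption~\ref{ass:SearchStrong} asks for a $k$-clique, not for $k-1$ plant vertices.
\end{itemize}
A minor slip: $\binom{k}{j}=\binom{k}{k-j}\le k^{k-j}$, not $k^{k-1-j}$. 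The one extra factor $k$ this costs is harmless. Your other option, writing $k^{j}=k^{k-1-j}k^{2j-k+1}$, is not harmless for $j$ near $k$.

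\textbf{The gap.} The final step does not follow: ``$\delta-o(1)$, which is still inverse polynomial'' fails because $\delta=n^{-c}$ can be polynomially smaller than the bad-event probability $\eta$.
\begin{itemize}
\item At the bottom of the window, $\eta$ really is only polynomially small. An outside vertex with at least $k-2$ neighbours in $S$ already yields a $(k{-}1)$-clique not contained in $S$.
\item These events are independent across outside vertices, so this happens with probability at least of order $k^2n2^{-k}$ (your $m=1$ term). At $k=(2+\varepsilon)\log_2 n$ this is $n^{-1-\varepsilon+o(1)}$.
\item So for $c>1+\varepsilon$ you get $\delta-\eta<0$. Nothing in your argument excludes an $\mathcal{A}$ whose successes all occur on this event. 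There $T\not\subseteq S$ typically has no common neighbour, and your finder returns nothing.
\end{itemize}
The paper's proof leaves this quantitative point equally implicit.

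\textbf{Repair.} Your own estimate suffices: each $m$-term is at most $kq^m$ with $q=kn2^{-(k-2)/2}$.
\begin{itemize}
\item If $k=\omega(\log n)$, then $q=n^{-\omega(1)}$. Hence $\eta\le kq/(1-q)=n^{-\omega(1)}=o(\delta)$ for every inverse polynomial $\delta$, and your argument stands as written.
\item If $k=\Theta(\log n)$, use the tail of the same sum. The probability that some $(k{-}1)$-clique has more than $m_0$ vertices outside $S$ is at most $k\sum_{m>m_0}q^m\le n^{-(m_0+1)\varepsilon/2+o(1)}$. This is $o(n^{-c})$ for $m_0=\lceil 2c/\varepsilon\rceil$.
\end{itemize}
In the second case, make the finder robust. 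Try every $U\subseteq T$ with $|T\setminus U|\le m_0$ and every $W\subseteq[n]\setminus U$ with $|W|=k-|U|$, and test whether $U\cup W$ is a clique.
\begin{itemize}
\item This takes $n^{O(m_0)}=\mathrm{poly}(n)$ time for constant $c$ and $\varepsilon$.
\item It succeeds whenever $T$ has at most $m_0$ outside vertices: take $U=T\cap S$ and $W=S\setminus U$.
\item It therefore yields success probability at least $\delta-o(n^{-c})\ge\delta/2$, which does contradict Assumption~\ref{ass:SearchStrong}.
\end{itemize}
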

 
\begin{proof}
A $(k{-}2)$-simplex is a $(k{-}1)$-clique; by Lemma~\ref{lem:plant} any
such clique is w.h.p.\ contained in the plant, so its vertices are $k-1$ of
the $k$ plant vertices, verifiable by $\binom{k-1}{2}$ edge queries. This
recovers a large clique, contradicting strong search PC.
\end{proof}
 
\begin{remark}[This is not a Betti-number corollary]
Corollary~\ref{cor:simplexstate} uses only that the null complex is empty in
dimension $k-2$; it makes no reference to homology and is a consequence of the planted clique conjecture directly. We include it because it is the input
subroutine of the quantum pipeline (below), but we flag that its hardness does
not route through Theorem~\ref{thm:main}. It also holds \emph{only} at the
critical dimension: for $d\ll k$ the null complex is dense in dimension $d$
and sampling a $d$-simplex is trivial in both cases.
\end{remark}
 
\begin{corollary}[Simplex-state preparation]\label{cor:simplexstate}
Under the strong search version of \textup{PC}, no efficient quantum algorithm
prepares, to constant trace distance and with probability $\ge n^{-c}$ over
$G\sim\Gnpk$, the uniform simplex state
$$
  |\psi_{k-2}\rangle \;\propto\; \sum_{\sigma\in X_{k-2}(G)} |\sigma\rangle .
$$
\end{corollary}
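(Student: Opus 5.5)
Suppose an efficient quantum algorithm $\mathcal{Q}$, given $G\sim\Gnpk$, outputs with probability at least $n^{-c}$ a state $\rho$ within constant trace distance $\tau<1$ of $|\psi_{k-2}\rangle$. We turn $\mathcal{Q}$ into a $k$-clique finder with inverse-polynomial success, which contradicts Assumption~\ref{ass:SearchStrong}.

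\textbf{Structure of the target state.} By Lemma~\ref{lem:null} applied to the background graph $G_0$, together with the argument of Lemma~\ref{lem:plant}, w.h.p.\ every $(k{-}1)$-clique of $G$ lies inside the plant $S$. Hence $X_{k-2}(G)$ consists exactly of the $k$ facets $S\setminus\{v\}$. On this event, measuring $|\psi_{k-2}\rangle$ in the computational (label) basis yields a $(k{-}1)$-subset of $S$ with probability $1$.

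\textbf{Sampling from the approximate state.} Measure $\rho$ in the label basis. Trace distance does not increase under measurement, so the outcome distribution is within total variation $\tau$ of the ideal one. The ideal distribution puts mass $1$ on valid facets, so the measured label is a facet of the plant with probability at least $1-\tau$. A candidate can be verified with $\binom{k-1}{2}$ edge queries. We then repeat $\mathcal{Q}$ a polynomial number of times and keep the first verified $(k{-}1)$-clique $T$. Where the guarantee holds for a given $G$, one run already succeeds with probability at least $(1-\tau)$. Averaging, the overall success probability is at least $n^{-c}(1-\tau)-o(1)$.

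\textbf{Completing $T$ to the full clique.} Given $T=S\setminus\{v\}$, scan all $u\notin T$ and keep those adjacent to every vertex of $T$. The vertex $v$ always qualifies. Any other qualifying $u$ would give a $(k{-}1)$-clique $T\setminus\{w\}\cup\{u\}$ not contained in $S$, and w.h.p.\ no such clique exists. So $T\cup\{u\}$ is a $k$-clique, found in polynomial time. The success probability is inverse polynomial, contradicting the strong search conjecture.

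\textbf{Main obstacle.} The delicate step is the w.h.p.\ claim that no $(k{-}1)$-clique meets $S$ only partially. Lemma~\ref{lem:null} handles only cliques that avoid the plant. For cliques sharing $j$ vertices with $S$, with $1\le j\le k-2$, I would bound the expected count by a first-moment sum
\[
\sum_j \binom{k}{j}\binom{n}{k-1-j}2^{-\binom{k-1}{2}+\binom{j}{2}}
\]
and check that it is $o(1)$ for $k\ge(2+\varepsilon)\log_2 n$.

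If that bound fails for some $j$, a fallback is available: a sampled clique with at least a constant fraction of its vertices in $S$ can still be grown greedily into $S$ using common-neighbourhood counts. This again produces the clique with inverse-polynomial probability.
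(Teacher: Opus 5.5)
Your reduction is the paper's: measure $\rho$ in the label basis, read off a facet of the plant, and appeal to strong search. You are more careful than the paper's two-line proof in two places.

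\emph{Completion.} You complete $T$ to a $k$-clique, which Assumption~\ref{ass:SearchStrong} actually asks for; the paper stops at ``recovers a large clique''. Your uniqueness argument there is unnecessary, since any $u$ adjacent to all of $T$ already makes $T\cup\{u\}$ a $k$-clique.

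\emph{Partial overlaps.} You notice that Lemma~\ref{lem:null} only controls $(k{-}1)$-cliques that avoid $S$. The paper's appeal to it in Lemma~\ref{lem:plant} does not cover partial overlaps. Your first-moment sum does go through, so the fallback is not needed. With $m=k-1-j$ outside vertices, the number of non-plant edges is $\binom{m}{2}+jm=m\bigl(k-\tfrac{m+3}{2}\bigr)$. Hence the $j$-th term is at most $k\bigl(kn\,2^{-k+(m+3)/2}\bigr)^m$. The logarithm of this bound is convex in $m$, so it suffices to check the endpoints $m=1$ and $m=k-2$. This bounds the whole sum by $4k^3n2^{-k}+n^{-\Omega(\log n)}\le n^{-1-\varepsilon+o(1)}$.

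\textbf{The gap.} The real problem is the last line of your sampling step. From $n^{-c}(1-\tau)-o(1)$ you conclude inverse-polynomial success. That needs your structural event to fail with probability $o(n^{-c})$, and it does not. The event already fails whenever some vertex outside $S$ is adjacent to $k-2$ plant vertices. This happens with probability $\Theta(k^2n2^{-k})$, which is $n^{-1-\varepsilon+o(1)}$ at $k=(2+\varepsilon)\log_2 n$. So for such $k$ and any $c\ge 1+\varepsilon$ your bound is vacuous: the graphs on which $\mathcal{Q}$ succeeds could all lie inside the failure event. The paper's proof has the same defect, since it also uses a w.h.p.\ statement inside an inverse-polynomial guarantee. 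For $k=\omega(\log n)$ the failure probability is $n^{-\omega(1)}$, and your argument is complete as written.

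\textbf{A repair for general $k$.} Two changes are needed.
\begin{enumerate}
\item Try to complete every sample, not only the first verified one.
\item Weaken the structural event to: at most $\gamma k$ of the $(k{-}1)$-cliques are \emph{maximal}, i.e.\ lie in no $k$-clique.
\end{enumerate}
Let $B=T\setminus S$ and let $N_S(B)$ be the set of plant vertices adjacent to all of $B$. A maximal $T\not\subseteq S$ satisfies $T\cap S=N_S(B)$; otherwise a vertex of $N_S(B)\setminus T$ extends it. So a maximal non-plant clique is determined by $B$. Now bound the two kinds of $B$:
\begin{enumerate}
\item Cliques with $|B|\ge M$ have expected number at most $n^{-M\varepsilon/2+o(1)}$, by the same first-moment bound.
\item Having $\gamma k$ admissible sets $B$ with $|B|<M$ forces at least $(\gamma k)^{1/M}$ outside vertices with at least $k-M$ neighbours in $S$. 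That event has probability at most $\bigl(n(k+1)^M2^{-k}\bigr)^{(\gamma k)^{1/M}}=n^{-\omega(1)}$.
\end{enumerate}
Taking $M>2c/\varepsilon$ makes the failure probability $o(n^{-c})$.

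The $k$ plant facets always extend. On the weakened event, each measurement therefore returns an extendable $(k{-}1)$-clique with probability at least $1-\tau-\gamma$. This gives overall success $(1-o(1))\,n^{-c}(1-\tau-\gamma)$, which is inverse polynomial once $\gamma<1-\tau$.
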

 
\begin{proof}
If $\rho$ is within trace distance $\tfrac14$ of
$|\psi_{k-2}\rangle\langle\psi_{k-2}|$, then measuring $\rho$ in the
computational basis returns a $(k{-}2)$-simplex with probability $\ge\tfrac12$
.
\end{proof}
 
\begin{remark}[Implications for the LGZ pipeline]\label{rem:lgz-pipeline}
$|\psi_{k-2}\rangle$ is exactly the input state of the LGZ algorithm~\cite{LloydGarneroneZanardi2016}: the pipeline first prepares the uniform
superposition over $d$-simplices, then applies phase estimation to
$\Delta_d$. Corollary~\ref{cor:simplexstate} shows that at $d=k-2$ this first
step is already \textit{average-case} hard on the $\Gnpk$ family. One clarification is essential to state the implication correctly.
This is distinct from the well-known \emph{density} obstruction to LGZ~\cite{GyurikCadeDunjko2022,Berry2024}: LGZ preparation costs
$O((f_d/\binom{n}{d+1}^{-1/2})$ by amplitude amplification, which is
inefficient when the clique density is small. That obstruction
is unconditional; ours is conditional on the PC conjecture. At $d=k-2$ both
obstructions are present but they are not the same statement: density says
``preparation is slow because simplices are rare,'' whereas
Corollary~\ref{cor:simplexstate} says ``preparation is hard on average because deciding
whether any simplex exists is planted-clique-hard.'' 
\end{remark}

\subsection{Corollaries in Topological Data Analysis}

\begin{figure}[h!]
    \centering
    \includegraphics[width=\linewidth]{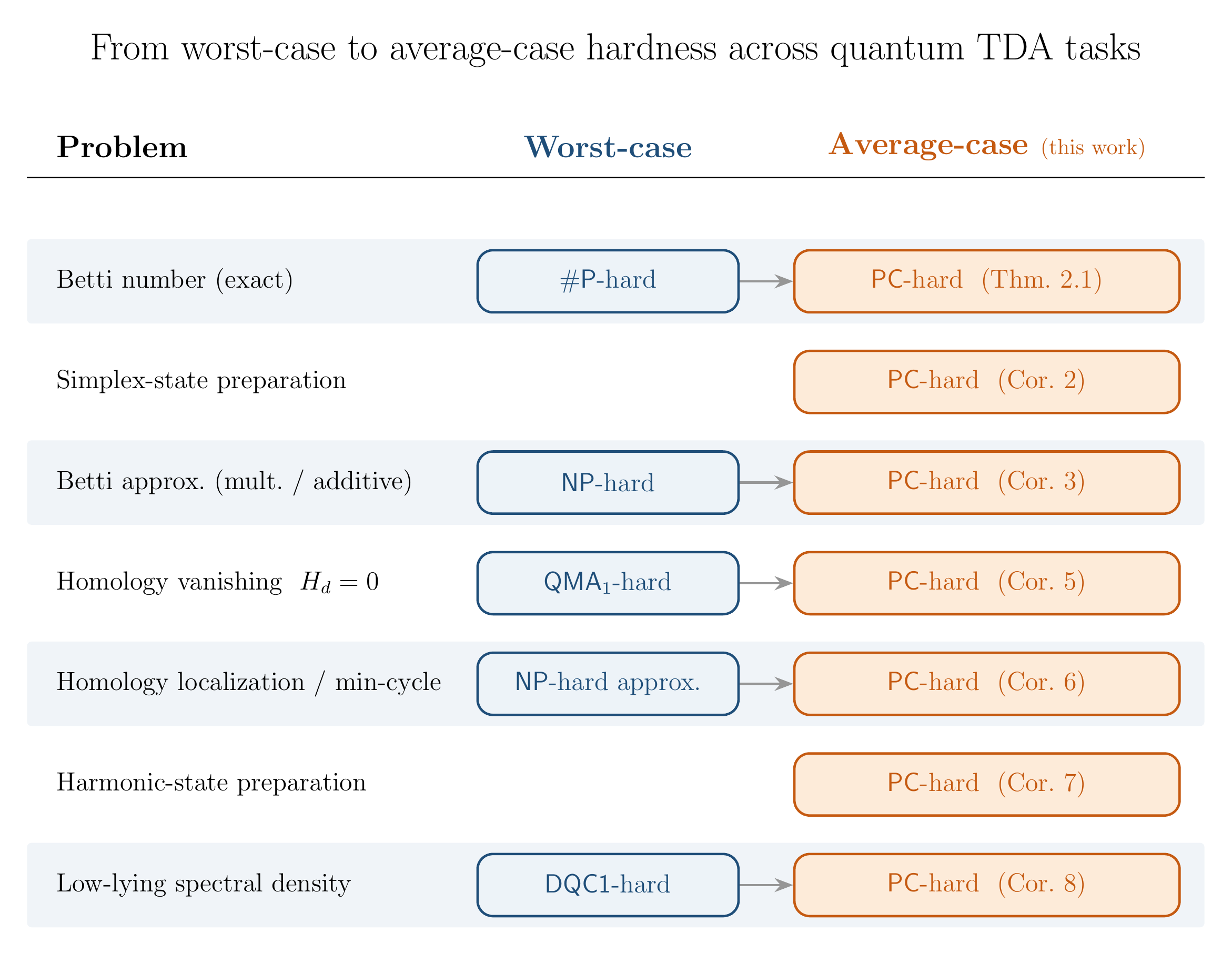}
    \caption{Illustration of comparison of the worst-case vs average-case hardness of various problems related to Topological Data Analysis. The hardness of exact and approximate Betti number computations follows from~\cite{SchmidhuberLloyd2023}, $\QMA_1$-hardness follows from ~\cite{CrichignoKohler2024}(see remark~\ref{rem:qma1_vsavg}), $\textsc{DQC}_1$-hardness from~\cite{GyurikCadeDunjko2022}, $\NP$-hardness of approximate homology localization from~\cite{ChenFreedman2011}.}
    \label{fig:COMPARISON}
\end{figure}

\begin{corollary}[Multiplicative and coarse additive approximation]\label{cor:approx}
Under Assumption~\ref{ass:pc}, no efficient algorithm outputs any finite
multiplicative approximation of $\betti_d$ with success probability
$\ge\tfrac12+\delta$ for a constant $\delta>0$. Moreover, at $d=k-2$ no
efficient algorithm estimates $\betti_{k-2}$ to additive error $<\tfrac12$
with such probability; and for $d=\Theta(\log n)$ with $k=n^{1/2-\varepsilon}$
the additive error may be relaxed to $\exp(c\log^2 n)$ for a constant $c>0$
and the conclusion still holds.
\end{corollary}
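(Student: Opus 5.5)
The plan is to reuse the detector construction and the advantage computation from the proof of Theorem~\ref{thm:main} essentially verbatim. For each of the three claims I only change two things: the dimension $d$ at which the null complex is empty, and the threshold rule that turns the estimate $\widehat{\betti}$ into a \textsc{planted}/\textsc{null} answer. In every case the argument rests on two facts, each of which I establish separately.

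\begin{itemize}
\item[(a)] A null-side statement: w.h.p.\ over $G_0\sim\Gnp$ we have $\betti_d=0$.
\item[(b)] A planted-side lower bound $\betti_d\ge L$ that holds deterministically, or w.h.p., over $\Gnpk$.
\end{itemize}

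Given (a) and (b), if the approximation guarantee forces the estimate into two disjoint sets on the two sides, then the computation $(s_0+s_1)-1-2\eta\ge\delta$ from the proof of Theorem~\ref{thm:main} applies unchanged. That contradicts Assumption~\ref{ass:pc}, and also Assumption~\ref{ass:pc_classical} in the classical setting.

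For (b), I will use monotonicity rather than the ``no extra simplices'' reasoning of Lemma~\ref{lem:plant}. In the top skeleton, \eqref{eq:top-skeleton} gives $\betti_d=\dim\ker\partial_d$. Adding further $d$-simplices, together with their lower faces, enlarges $C_d$. The old kernel remains inside the new one, because $\partial_d$ restricted to the old chains is unchanged. Hence $\betti_d(\skel_d X(G))\ge\betti_d(\skel_d\,\sigma^{k-1})$, where $\sigma^{k-1}$ is the full simplex on the plant $S$. This lower bound holds for every outcome of $\Gnpk$, so there is no failure probability on the planted side.

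I then compute $\betti_d(\skel_d\sigma^{k-1})$. The full simplex is acyclic, so $\ker\partial_d=\operatorname{im}\partial_{d+1}$ in $\sigma^{k-1}$. The rank of $\partial_{d+1}$ on the simplex is $\binom{k-1}{d+1}$; this follows by the standard cone argument or by induction from exactness of the simplex's chain complex. At $d=k-2$ this gives $1$, recovering Lemma~\ref{lem:plant}.

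With these facts, the three claims go as follows.

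\begin{itemize}
\item \textbf{Multiplicative approximation.} I take $d=k-2$. On the null side, Lemma~\ref{lem:null} gives $\betti_d=0$ w.h.p., and Definition~\ref{def:problem} then forces $\widehat{\betti}=0$. On the planted side $\betti_d\ge1$, so $\widehat{\betti}\ge 1/C>0$. The detector outputs \textsc{planted} iff $\widehat{\betti}>0$.
\item \textbf{Additive error below $\tfrac12$ at $d=k-2$.} This is Theorem~\ref{thm:main} itself.
\item \textbf{Coarse additive error.} I take $k=n^{1/2-\varepsilon}$ and $d+1=\lceil(2+\varepsilon')\log_2 n\rceil$. The computation in Lemma~\ref{lem:null}, applied to $(d+1)$-cliques in place of $(k-1)$-cliques, shows that w.h.p.\ $\Gnp$ has no $(d+1)$-clique. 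Hence $f_d=0$ and $\betti_d=0$ in the null case. On the planted side, monotonicity gives
\[
\betti_d\;\ge\;\binom{k-1}{d+1}\;\ge\;\Bigl(\tfrac{k-1}{d+1}\Bigr)^{d+1}\;=\;n^{(1/2-\varepsilon-o(1))(2+\varepsilon')\log_2 n}\;=\;\exp\bigl(c'\log^2 n\bigr)
\]
for a constant $c'>0$. I take any $c<c'$ and threshold the estimate at $\tfrac12\exp(c'\log^2 n)$. An additive error below $\exp(c\log^2 n)$ then separates the two cases for large $n$.
\end{itemize}

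The step I expect to be the main obstacle is the planted side at $d=\Theta(\log n)$. There, unlike at $d=k-2$, the planted graph genuinely contains many mixed $(d+1)$-cliques, made of some plant vertices and some outside vertices. This happens because a single outside vertex adjacent to $d$ plant vertices is likely when $k\gg d$, so the clean ``only the plant's faces survive'' description fails. The monotonicity of $\dim\ker\partial_d$ in the top skeleton is exactly what sidesteps this issue. I would state it as a short separate lemma, since Lemma~\ref{lem:plant} as written does not cover it.

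A minor point also needs checking: the choice $d+1\approx(2+\varepsilon')\log_2 n$ must lie below $k-1$ so that $\binom{k-1}{d+1}$ is nontrivial. This is immediate for $k=n^{1/2-\varepsilon}$.
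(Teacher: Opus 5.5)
Your proof is correct. Its overall structure matches the paper's: reuse the detector of Theorem~\ref{thm:main}, get emptiness on the null side by a first-moment bound, prove a planted lower bound of $\binom{k-1}{d+1}$, then threshold. Where you genuinely differ is the planted-side lemma.

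\textbf{The paper's planted-side step.} For $d\ll k$ the paper justifies $\betti_d\ge\binom{k-1}{d+1}$ simply by citing Lemma~\ref{lem:plant}. That lemma is stated only at $d=k-2$, and its proof rests on the plant's facets being the only top simplices. As you point out, this description is false at $d=\Theta(\log n)$: every outside vertex has about $k/2\gg d$ plant neighbours, so it spans many mixed $(d+1)$-cliques. The proof of Corollary~\ref{cor:normalized} gestures at the right fact (the plant's cycles ``survive in $\ker\partial_d$''), but never justifies it.

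\textbf{Your replacement.} You observe that $\dim\ker\partial_d$ is monotone under inclusion of top skeleta. This holds because there is no $\partial_{d+1}$ to quotient by, and $\partial_d$ restricted to the subcomplex's chains is unchanged. Combined with the exactness count $\rank\partial_j=\binom{k-1}{j}$ on the simplex, this supplies exactly the missing step.
\begin{itemize}
\item The bound holds deterministically for every outcome of $\Gnpk$ and uniformly in $d$.
\item You never need to control the extra simplices.
\item At $d=k-2$ it recovers the conclusion of Lemma~\ref{lem:plant} with no probabilistic claim.
\end{itemize}

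\textbf{What each route gives.} The paper's structural description, where it holds at $d=k-2$, gives more than a bound. It yields a one-dimensional kernel spanned by the plant's boundary cycle, which the later localization, harmonic-state and spectral-density corollaries use. For the present corollary only a lower bound is needed, so your argument is the cleaner one and is actually complete. Your explicit choice of $c<c'$ with threshold $\tfrac12\exp(c'\log^2 n)$ also tidies the paper's loose identification $\tfrac12\binom{k-1}{d+1}=\exp(c\log^2 n)$.
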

 
\begin{proof}
A multiplicative approximation distinguishes $\betti_d=0$ from
$\betti_d\ge 1$: any finite factor maps $0$ to $0$ and a positive integer to a
positive number. By Lemma~\ref{lem:indicator} this decides the planted case,
so Theorem~\ref{thm:main} applies. For the additive claim at $d=k-2$ the value
gap is exactly $1$, giving error tolerance $<\tfrac12$. For $d\ll k$,
Lemma~\ref{lem:plant} gives planted value
$\ge\binom{k-1}{d+1}$ while the null value is $0$ w.h.p.; with
$d+1>(2+\epsilon)\log{n}$ and $k=n^{1/2-\varepsilon}$ one has
$\binom{k-1}{d+1}=\exp(\Theta(\log^2 n))$, so any additive approximation to
error $<\tfrac12\binom{k-1}{d+1}=\exp(c\log^2 n)$ still separates the two
cases.
\end{proof}

\begin{corollary}[Normalized Betti-number estimation]
\label{cor:normalized}
Under \textup{PC}, fix $k$ in the window \ref{ass:pc} and any $d$ with
$(2+\varepsilon)\log_2 n \le d+1 = o(k)$. Then no efficient algorithm estimates
$\beta_d\bigl(\mathrm{skel}_d X(G)\bigr)/f_d(G)$ to additive
error $<\tfrac12$ with success probability $\ge \tfrac12+\delta$, for any
constant $\delta>0$, over $G\sim G(n,\tfrac12)$ vs. $G(n,\tfrac12,k)$.
\end{corollary}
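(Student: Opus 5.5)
The plan is to run the same detector argument as in the proof of Theorem~\ref{thm:main}. The only change is the threshold quantity: we use the normalized value $\beta_d/f_d$ in place of $\beta_d$. We need two facts.
\begin{itemize}
\item \emph{Null case.} W.h.p.\ $f_d(G)=0$, so by convention (or by Definition~\ref{def:problem} applied to the ratio) the normalized Betti number is $0$.
\item \emph{Planted case.} W.h.p.\ $\beta_d/f_d = 1-\tfrac{d+1}{k}$, which tends to $1$ because $d+1=o(k)$. In particular it exceeds $\tfrac12$ plus any additive slack for large $n$.
\end{itemize}
Given these, an estimator with additive error $<\tfrac12$, thresholded at $\tfrac12$, separates the cases. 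The advantage computation $(s_0+s_1)-1-2\eta\ge\delta$ from the proof of Theorem~\ref{thm:main} then contradicts Assumption~\ref{ass:pc} (or Assumption~\ref{ass:pc_classical}). One caveat: with error strictly below $\tfrac12$, the gap between $0$ and $1-o(1)$ suffices once $1-\tfrac{d+1}{k}>\tfrac12+\text{(error)}$. So I would state the threshold as the midpoint $\tfrac12\bigl(1-\tfrac{d+1}{k}\bigr)$ and note that an error $<\tfrac12$ still works asymptotically, after possibly shrinking the error to $\tfrac12-\delta'$. This small slack should be handled explicitly.

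For the null case, the first-moment bound of Lemma~\ref{lem:null} applies verbatim with $k-1$ replaced by $d+1$. The expected number of $(d+1)$-cliques is at most $\bigl(n2^{-d/2}\bigr)^{d+1}$. Since $d+1\ge(2+\varepsilon)\log_2 n$, the base is $n^{-\varepsilon/2+o(1)}$, so Markov's inequality gives $f_d=0$ w.h.p.

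For the planted case, the same first-moment bound shows that w.h.p.\ every $(d+1)$-clique of $G$ lies inside the plant $S$. The argument is as follows. The background graph $G_0$ has no $(d+1)$-clique w.h.p. Any clique using $j\ge1$ vertices outside $S$ needs all edges among those $j$ vertices and all edges from them to the $d+1-j$ plant vertices. The expected count is then at most $\sum_{j}\binom{k}{d+1-j}\binom{n}{j}2^{-\binom{j}{2}-j(d+1-j)}$, and this should be shown to be $o(1)$. Granting this, $\skel_d X(G)$ coincides with $\skel_d\sigma^{k-1}$, the $d$-skeleton of the full simplex on $S$. Hence $f_d=\binom{k}{d+1}$. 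Since $\sigma^{k-1}$ is contractible, its reduced homology vanishes. Truncation leaves $\beta_d(\skel_d\sigma^{k-1})=\dim\ker\partial_d=\rank\partial_{d+1}(\sigma^{k-1})=\binom{k-1}{d+1}$, which follows from exactness and a telescoping count of ranks (equivalently, the standard fact that $\skel_d\sigma^{k-1}$ is a wedge of $\binom{k-1}{d+1}$ spheres $S^d$). The ratio is $\binom{k-1}{d+1}/\binom{k}{d+1}=\tfrac{k-d-1}{k}=1-\tfrac{d+1}{k}$.

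The main obstacle is the mixed-clique first-moment estimate, i.e.\ bounding cliques that use some plant and some outside vertices. Lemma~\ref{lem:plant} glosses over this, and the paper's remark admits the possibility of extra cliques. The problem terms are those with small $j$: for $j=1$ the count is about $n\binom{k}{d}2^{-d}$, and since $d\ge(2+\varepsilon)\log_2 n$ we have $2^{-d}\le n^{-2-\varepsilon}$ while $\binom{k}{d}$ may be huge. So the sum may \emph{not} vanish, and I would instead bound the number $M$ of extra $d$-simplices in expectation. I would then show that they perturb the ratio only by $o(1)$ relative to $f_d\ge\binom{k}{d+1}$. Adding $M$ extra $d$-simplices changes $\dim\ker\partial_d$ by at most $M$ and $f_d$ by $M$. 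It therefore suffices that $\mathbb{E}M=o\bigl(\binom{k}{d+1}\bigr)$, which after dividing by $\binom{k}{d+1}$ reduces to $\sum_{j\ge1}\bigl(\tfrac{d+1}{k}\bigr)^{j}n^{j}2^{-j(d+1-j)-\binom j2}=o(1)$. This holds because $n2^{-(d+1-j)}$ is tiny for small $j$, and large $j$ is handled by the background bound. This robust version is what I would actually prove.
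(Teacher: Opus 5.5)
Your route is the paper's: first-moment emptiness of the null complex (with $\beta_d/f_d:=0$), the count $\binom{k-1}{d+1}$ from the $d$-skeleton of the plant, and a first-moment bound on the extra $d$-simplices giving $f_d=(1+o(1))\binom{k}{d+1}$ w.h.p. Your ``robust version'' is exactly how the paper treats cliques meeting the plant. You only sketch the middle range of $j$, but a uniform bound closes it. A clique with $m\ge 1$ vertices outside the plant needs $\binom{m}{2}+m(d+1-m)$ random edges. For every $m\le d+1$ we have $n^m 2^{-\binom{m}{2}-m(d+1-m)}=\bigl(n2^{-(d+1)+(m+1)/2}\bigr)^m\le (n2^{-d/2})^m\le(\sqrt{2}\,n^{-\varepsilon/2})^m$. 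Also $\binom{k}{d+1-m}\le\binom{k}{d+1}$ because $d+1\le k/2$. Summing gives $\mathbb{E}M=O(n^{-\varepsilon/2})\binom{k}{d+1}$.

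The genuine gap is the step you flag and then dismiss: ``an error $<\tfrac12$ still works asymptotically, after possibly shrinking the error to $\tfrac12-\delta'$.'' The error bound belongs to the hypothesised algorithm, so you cannot shrink it. In fact the corollary as literally stated is false.

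\begin{itemize}
\item The plant alone forces $\rank\partial_d\ge\binom{k}{d+1}-\binom{k-1}{d+1}=\binom{k-1}{d}$. Using $f_d\le 2\binom{k}{d+1}$ w.h.p., the planted value satisfies $\beta_d/f_d=1-\rank\partial_d/f_d\le 1-\tfrac{d+1}{2k}$.
\item Consider the constant algorithm that outputs $c_n=\tfrac12-\tfrac{d+1}{4k}$. Against the null value $0$ its error is $c_n<\tfrac12$. Against the planted value its error is at most $\tfrac12-\tfrac{d+1}{4k}<\tfrac12$. Outputting $\tfrac12-\tfrac{d+1}{4\sqrt{n}}$ works equally well without knowing $k$.
\item So this trivial algorithm succeeds with probability $1-o(1)$ over $\mathcal{D}_k$.
\end{itemize}

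The planted value stays at least $\tfrac{d+1}{2k}$ below $1$, so the gap between the two values is strictly less than $1$. Thresholding needs an error below half that gap, and a tolerance that is merely $<\tfrac12$ does not supply it.

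What your argument actually proves is the corrected statement: for every constant $\delta'>0$, no efficient algorithm attains additive error $\le\tfrac12-\delta'$ with success probability $\ge\tfrac12+\delta$. The paper's proof has the same defect. Its step ``separated by $>\tfrac12$, so thresholding an error at $\varepsilon<\tfrac12$ decides'' is valid only for a constant $\varepsilon$ strictly below $\tfrac12$. You should build that constant slack into the hypothesis rather than claim that error $<\tfrac12$ suffices.
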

 
\begin{proof}
Null: since $d+1\ge(2+\varepsilon)\log_2 n$, Lemma~\ref{lem:null} (with $d+1$ for
$k-1$) gives $\beta_d,f_d=0$ w.h.p., so $\beta_d/f_d=0$. We set $\beta_d/f_d := 0$ when $f_d = 0$. 
Planted: the plant contributes
$\binom{k-1}{d+1}$ independent nonbounding $d$-cycles (its $d$-skeleton is a wedge
of that many $d$-spheres), which survive in $\ker\partial_d$, so
$\beta_d\ge\binom{k-1}{d+1}$; and the expected number of $(d{+}1)$-cliques outside
the plant is $\sum_{j\le d}\binom{k}{j}\binom{n-k}{d+1-j}2^{-\binom{d+1}{2}+\binom{j}{2}}
=o\!\bigl(\binom{k}{d+1}\bigr)$ (dominated by the $j=d$ term
$\binom{k}{d+1}\tfrac{2(d+1)}{n^{1+\varepsilon}}$, using $2^d\ge\tfrac12 n^{2+\varepsilon}$),
so $f_d=(1+o(1))\binom{k}{d+1}$ w.h.p. Hence
$\beta_d/f_d\ge(1-o(1))\bigl(1-\tfrac{d+1}{k}\bigr)=1-o(1)$. The two values $0$ and
$1-o(1)$ are separated by $>\tfrac12$, so thresholding an error at $\varepsilon<\tfrac12$  decides $\mathrm{PClique}$ as in Theorem~\ref{thm:main}.
\end{proof}
 
\begin{remark}[Relation to the LGZ target precision]\label{rem:lgz-precision}
$\betti_d/f_d$ is exactly the quantity the LGZ algorithm and its descendants
estimate~\cite{LloydGarneroneZanardi2016,GyurikCadeDunjko2022}, and they aim
for inverse-polynomial additive error. Corollary~\ref{cor:normalized} shows
hardness already at additive error $<\tfrac12$. The apparent paradox with the algorithms' efficiency
is resolved by their promises: LGZ is efficient only when the normalized Betti
number is not small and the simplex state is efficiently preparable, and our
planted family violates the second promise (see
Corollary~\ref{cor:simplexstate}). 
\end{remark}

\begin{corollary}[Homology vanishing]\label{cor:vanishing}
Under Assumption~\ref{ass:pc}, no efficient algorithm decides whether
$H_d(\skel_d X(G))=0$ with success probability $\ge\tfrac12+\delta$ for a
constant $\delta>0$. 
\end{corollary}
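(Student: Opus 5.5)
The plan is to reduce homology vanishing to the planted-clique detection problem, exactly as in the proof of Theorem~\ref{thm:main}, using the indicator of Lemma~\ref{lem:indicator}. We fix $k$ in the window of Assumption~\ref{ass:pc} and take $d=k-2$. Suppose $\mathcal{A}$ is an efficient (classical or quantum) algorithm that decides whether $H_d(\skel_d X(G))=0$ with success probability at least $\tfrac12+\delta$ over $G\sim\mathcal{D}_k$. The detector for $\textsc{PClique}(n,k)$ runs $\mathcal{A}$ on input $(G,k-2)$. This is the identity map on instances, with the complex accessed through the local oracle of the access-model subsection. The detector outputs \textsc{null} if $\mathcal{A}$ declares $H_{k-2}=0$ and \textsc{planted} otherwise.

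The key step is the observation that, over $\mathbb{R}$, $H_d=0$ holds exactly when $\betti_d=0$. By Lemma~\ref{lem:null}, with probability $1-\eta$ a null graph has $\betti_{k-2}=0$, so $H_{k-2}$ vanishes. By Lemma~\ref{lem:plant}, with probability $1-\eta$ a planted graph has $\betti_{k-2}\ge1$, so $H_{k-2}\neq0$. Here $\eta=o(1)$. We then repeat the accounting from the proof of Theorem~\ref{thm:main}. Let $s_0,s_1$ be the success rates of $\mathcal{A}$ on the two components of the mixture; the hypothesis gives $s_0+s_1\ge1+2\delta$. The detector outputs \textsc{planted} on $\Gnpk$ with probability at least $s_1-\eta$, and on $\Gnp$ with probability at most $1-s_0+\eta$. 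Its advantage is therefore at least $2\delta-2\eta\ge\delta$ for large $n$, which contradicts Assumption~\ref{ass:pc}, or Assumption~\ref{ass:pc_classical} in the classical setting.

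There is no real obstacle, because this is a decision version of the $0$ versus $\ge1$ gap already established. The one point needing care is the exact formulation. The statement does not fix $d$, so we read it as quantified at $d=k-2$; any $d$ with $d+1\ge(2+\varepsilon)\log_2 n$ and $d\le k-2$ works equally well, by the argument in the proof of Corollary~\ref{cor:normalized}, where the null complex is empty and the plant contributes $\binom{k-1}{d+1}\ge1$ classes. We also note that the success probability is over the balanced mixture $\mathcal{D}_k$, which is why the sum form $s_0+s_1\ge1+2\delta$ rather than per-component bounds is used. Equivalently, the corollary follows in one line from Theorem~\ref{thm:main}: a vanishing decider yields an estimator that outputs $0$ or $1$, and on the w.h.p.\ events this estimator has additive error $<\tfrac12$ whenever $\betti_{k-2}\in\{0,1\}$. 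That holds on the plant, where the only $(k{-}2)$-simplices are its $k$ facets. Because Lemma~\ref{lem:plant} only guarantees $\betti_{k-2}\ge1$, however, we prefer the direct detector argument above, which needs only $\betti_{k-2}\ge1$ and not $\betti_{k-2}=1$.
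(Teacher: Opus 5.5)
Your proposal is correct and follows the paper's proof: both use the $0$ versus $\ge 1$ gap of Lemma~\ref{lem:indicator} at $d=k-2$, together with the detector accounting from Theorem~\ref{thm:main}. Rerunning the detector argument directly, instead of invoking Theorem~\ref{thm:main} as a black box as the paper does, is a sensible tightening: it needs only $\betti_{k-2}\ge 1$ in the planted case, which is all that Lemma~\ref{lem:plant} states.
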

 
\begin{proof}
The null and planted values of $\betti_d$ are $0$ and $\ge 1$ w.h.p.: the boundary sphere $\partial\sigma^{k-1}\cong S^{k-2}$ has
$\betti_{k-2}\geq1$ 
(Lemma~\ref{lem:plant}). Deciding $H_d=0$ therefore decides the planted case, and Theorem~\ref{thm:main} applies.
\end{proof}

\begin{corollary}[Homology localization and cycle representatives]\label{cor:localization}
Under the strong search version (not merely deciding whether the clique was planted) of \textup{PC}~\cite{HiraharaShimizu2024}, for every constant
$c>0$ no efficient algorithm outputs, with probability $\ge n^{-c}$ over
$G\sim\Gnpk$, any nonzero $(k{-}2)$-cycle of $\skel_{k-2}X(G)$. Equivalently
any representative of a nontrivial class in $H_{k-2}$, exact or within any
$\ell_2$-distance $<\tfrac1{\sqrt k}$ of a genuine cycle. In particular
computing minimal-weight cycle representatives is \textit{average-case} hard.
\end{corollary}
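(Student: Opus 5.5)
The plan is to reduce clique recovery in $\Gnpk$ to outputting a nonzero $(k{-}2)$-cycle, exactly as in Corollary~\ref{cor:simplex}, and then contradict Assumption~\ref{ass:SearchStrong}. Suppose an efficient $\mathcal{A}$ outputs, with probability $\ge n^{-c}$ over $G\sim\Gnpk$, a vector $z\in C_{k-2}$ that is either a nonzero cycle of $\skel_{k-2}X(G)$ or lies within $\ell_2$-distance $<1/\sqrt k$ of a genuine nonzero cycle $z^\ast$. Since $\skel_{k-2}X(G)$ has no $(k{-}1)$-simplices, $B_{k-2}=0$. Hence nonzero cycles and nontrivial homology representatives coincide, and by \eqref{eq:top-skeleton} the cycle space is $\ker\partial_{k-2}$.

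\textbf{Step 1 (structure of the cycle space).} Intersect with the w.h.p.\ event from Lemmas~\ref{lem:null} and~\ref{lem:plant} that every $(k{-}2)$-simplex of $X(G)$ is one of the $k$ facets of the planted simplex on $S$. On this event $C_{k-2}$ is spanned by these $k$ facets, and $\ker\partial_{k-2}$ is the one-dimensional space spanned by $\partial\sigma^{k-1}$, the fundamental class of $\partial\sigma^{k-1}\cong S^{k-2}$. Its normalized version has entries $\pm 1/\sqrt k$ on all $k$ facets. The failure probability is $o(1)$. This is not automatically below $n^{-c}/2$, so I would check the first-moment bound of Lemma~\ref{lem:null}: it gives failure probability $n^{-\Omega(\varepsilon k)}$, which is superpolynomially small for $k\ge(2+\varepsilon)\log n$. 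With that bound, $\mathcal{A}$ still succeeds on the good event with probability $\ge n^{-c}/2$.

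\textbf{Step 2 (decoding).} In the exact case, $z$ is a nonzero multiple of $\partial\sigma^{k-1}$. Its support is the set of all $k$ facets, whose union is $S$. Read off the vertices appearing in the support and verify with $\binom{k}{2}$ edge queries that they form a $k$-clique. In the approximate case, normalize the target so that $z^\ast=\lambda\,\partial\sigma^{k-1}/\sqrt k$. Every coordinate of $z^\ast$ on a facet then has magnitude $|\lambda|/\sqrt k$, and the stated tolerance keeps each of these coordinates nonzero in $z$. I would therefore take the $k$ coordinates of largest magnitude, or any coordinate of $z$ above threshold that is indexed by a genuine simplex (checkable by edge queries). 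Off-simplex labels can be discarded locally, since only labels that are cliques in $G$ matter. Any single surviving facet already gives $k-1$ plant vertices. The last vertex is found by testing the $n$ candidates for adjacency to all of them; the only common neighbour is the plant vertex, w.h.p., since otherwise a second $k$-clique would exist.

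\textbf{Main obstacle.} The delicate point is the approximate-cycle clause. The paper's $1/\sqrt k$ threshold presumes unit normalization, so I would state explicitly that distance is measured relative to the norm of $z^\ast$. I would also note that the decoding only needs one correctly identified facet. Then the output is a $k$-clique of $\Gnpk$ with probability $\ge n^{-c}/2$, contradicting Assumption~\ref{ass:SearchStrong}. Minimal-weight representatives are a special case of nonzero cycles, which gives the final sentence of the statement.
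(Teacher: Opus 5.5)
\textbf{Gap: the superpolynomial bound in Step~1 is for the wrong event.} Your route is the paper's: on the event that the only $(k{-}2)$-simplices are the $k$ plant facets, $\ker\partial_{k-2}$ is spanned by $\partial\sigma^{k-1}$. The support of the output then reveals $S$, contradicting Assumption~\ref{ass:SearchStrong}. Your relative-normalization reading of the $1/\sqrt k$ tolerance is the right one. You are also right that an $o(1)$ bad event does not suffice against success probability $n^{-c}$; the paper's proof just cites Lemma~\ref{lem:plant} and never addresses this.

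The first-moment bound of Lemma~\ref{lem:null} controls $(k{-}1)$-cliques of $G_0$. In $G=G_0\cup K_S$, however, a $(k{-}1)$-clique can use planted edges. An outside vertex $v$ has $|N(v)\cap S|\ge k-2$ with probability about $\binom{k}{2}2^{-k}$, independently over the $n-k$ outside vertices. When this happens, $\{v\}\cup B$ with $B\subseteq N(v)\cap S$, $|B|=k-2$, is a non-plant $(k{-}2)$-simplex. So your good event fails with probability $\Theta(k^2n2^{-k})$. At $k=(2+\varepsilon)\log_2 n$ this is $n^{-1-\varepsilon+o(1)}$: polynomially, not superpolynomially, small. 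With probability about $kn2^{-k}$, some $v$ is even adjacent to $k-1$ plant vertices. That creates a second $k$-clique and a second independent cycle, so $\ker\partial_{k-2}$ is no longer one-dimensional.

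Hence, for $c>1+\varepsilon$ and $k$ near the bottom of the window, the step \emph{$\mathcal{A}$ still succeeds on the good event with probability $\ge n^{-c}/2$} does not follow. All of $\mathcal{A}$'s success could sit on the bad event. There your decoder, which commits to one facet and assumes it contains $k-1$ plant vertices, has no guarantee.

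\textbf{Two repairs.} First, if $k\ge(c+2)\log_2 n$ (automatic when $k=n^{\Omega(1)}$), the probability that any non-plant $(k{-}1)$-clique exists is $O(k^2n^{-c-1})=o(n^{-c})$. Your argument then goes through, approximate clause included.

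Second, for the exact clause on the whole window, let the decoder try every facet $F$ in the support of the output and every $w\in[n]$, returning $F\cup\{w\}$ if it is a $k$-clique. Any $k$-clique contradicts Assumption~\ref{ass:SearchStrong}, not only $S$. This decoder succeeds deterministically whenever $G$ has no $(k{-}1)$-clique disjoint from $S$. That event fails with probability $n^{-\Omega(\varepsilon k)}$, by exactly your first-moment bound. To see why it succeeds:
\begin{itemize}
\item Pick a support simplex $\tau$ of the cycle $z$ with the largest number $t$ of outside vertices, and set $T=\tau\setminus S$.
\item The chain $B\mapsto\pm z(T\cup B)$ (the iterated link of $T$ in $z$) is a nonzero reduced $(k{-}2{-}t)$-cycle.
\item By maximality of $t$, this chain is supported on subsets of the clique $S_T=S\cap\bigcap_{v\in T}N(v)$.
\item A nonzero reduced $s$-cycle on a full simplex needs at least $s+2$ vertices, so $|S_T|\ge k-t$ when $t\le k-2$.
\item Since $\tau\cap S\subseteq S_T$ has only $k-1-t$ elements, $\tau$ extends to a $k$-clique by any vertex of $S_T\setminus\tau$.
\item The remaining case $t=k-1$ is exactly the excluded event.
\end{itemize}
The approximate clause, when there are several independent cycles, would still need its own argument.
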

 
\begin{proof}
By Lemma~\ref{lem:plant} every nonzero cycle is a multiple of $z$, whose
support is all $k$ facets of the plant. Any vector within $\ell_2$-distance
$<\tfrac1{\sqrt k}$ of the normalised cycle $z/\sqrt k$ has all $k$ facet
coordinates nonzero (each coordinate of $z/\sqrt k$ has magnitude
$1/\sqrt k$), so reading off its support and verifying each facet by
$\binom{k-1}{2}$ edge queries recovers $k-1$ plant vertices. This contradicts
the strong search version of PC as in
Corollary~\ref{cor:simplex}. Minimal-weight representatives are a special
case of Corollary~\ref{cor:simplex}., giving an \textit{average-case} analogue of the worst-case hardness of~\cite{ChenFreedman2011}.
\end{proof}

\begin{corollary}[Harmonic-state preparation]\label{cor:harmonicstate}
Under the strong search version of \textup{PC}, no efficient quantum algorithm
prepares, to constant trace distance and with probability $\ge n^{-c}$ over
$G\sim\Gnpk$, the maximally mixed state
$\Pi_0/\betti_{k-2}$ on $\ker\Delta_{k-2}$.
\end{corollary}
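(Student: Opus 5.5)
The plan is to reduce harmonic-state preparation to simplex search (Corollary~\ref{cor:simplex}), in the same way that Corollary~\ref{cor:simplexstate} reduced simplex-state preparation to it.

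The first step is to identify the target state on typical planted instances. By Lemma~\ref{lem:null}, w.h.p.\ over $G\sim\Gnpk$ every $(k{-}1)$-clique of $G$ lies inside the plant. So the only $(k{-}2)$-simplices of $\skel_{k-2}X(G)$ are the $k$ facets of the planted simplex. On this event $\Delta_{k-2}=\partial_{k-2}^{\top}\partial_{k-2}$ is supported on a $k$-dimensional coordinate subspace. Its kernel is spanned by the boundary cycle $z=\partial\sigma^{k-1}$, whose coordinates are all $\pm1$ on the $k$ facets. Hence $\betti_{k-2}=1$ and
\[
\Pi_0/\betti_{k-2}=|\hat z\rangle\langle\hat z|,\qquad |\hat z\rangle=z/\sqrt k .
\]
This is a pure state that puts weight exactly $1/k$ on each facet. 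More generally, even if $\betti_{k-2}>1$ because extra cliques appear, $\ker\Delta_{k-2}$ is contained in the span of the existing $(k{-}2)$-simplices. So $\Pi_0/\betti_{k-2}$ is always supported on computational basis states that are genuine $(k{-}2)$-simplices. This second observation is the one I will actually use, since it avoids needing exact control of the kernel.

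The second step is the measurement argument. Suppose an efficient quantum algorithm outputs, with probability $\ge n^{-c}$ over $G$, a state $\rho$ with $\tfrac12\|\rho-\Pi_0/\betti_{k-2}\|_1\le\tfrac14$. Measure $\rho$ in the computational basis of labels, i.e.\ $(k{-}1)$-subsets of $[n]$. The ideal state assigns probability $1$ to the set of valid simplices. By the operational meaning of trace distance, $\rho$ then yields a valid $(k{-}2)$-simplex with probability at least $\tfrac34$. Validity is checked with $\binom{k-1}{2}$ edge queries, so we can recognise success. Any valid output is a $(k{-}1)$-clique, which w.h.p.\ consists of $k-1$ plant vertices.

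The third step is to finish the recovery and collect probabilities. Extend the $(k{-}1)$-clique to the full plant by scanning the remaining $n-k+1$ vertices for one adjacent to all $k-1$ of them. W.h.p.\ only the missing plant vertex qualifies: a fixed non-plant vertex is adjacent to all $k-1$ with probability $2^{-(k-1)}\le n^{-2-\varepsilon}$, so a union bound over vertices works. This produces a $k$-clique. The overall success probability is at least $n^{-c}\cdot\tfrac34-o(1)$, which is inverse polynomial. That contradicts Assumption~\ref{ass:SearchStrong}. If one prefers the direct route, this argument just exhibits the harmonic-state preparer as a simplex-search algorithm and invokes Corollary~\ref{cor:simplex}.

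The main obstacle is the interaction between the $n^{-c}$ success event and the $o(1)$ bad events of Lemma~\ref{lem:null}. If $o(1)$ is larger than $n^{-c}$, subtracting it could swamp the success probability. I would handle this with the explicit bound from the proof of Lemma~\ref{lem:null}: the bad-event probability is at most $n^{-(\varepsilon/2-o(1))(k-1)}$, which is superpolynomially small since $k\ge(2+\varepsilon)\log n$. Likewise, the extension failure probability is at most $n^{-1-\varepsilon}\cdot\poly$ corrections, which can be made small enough by taking $c$ below the relevant exponent. Alternatively, one can check the final $k$-clique directly, so that failures of the extension step are detected rather than silently counted as successes.
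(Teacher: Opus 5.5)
Your argument follows the paper's idea, which is that measuring the harmonic state in the computational basis exposes the plant. The paper declares $\Pi_0/\betti_{k-2}=|h_G\rangle\langle h_G|$ and routes the conclusion through Corollary~\ref{cor:localization}. Your version is sturdier because it uses only that the target state is supported on genuine $(k-2)$-simplices. That holds for \emph{every} planted $G$: the plant's boundary cycle always lies in $\ker\partial_{k-2}$, so $\betti_{k-2}\ge 1$ deterministically. This also avoids the paper's tension between ``at least one-dimensional'' and purity. Measuring and checking edges then yields a valid $(k-2)$-simplex with probability at least $\tfrac34 n^{-c}$, with no error term subtracted. So the route you mention last, invoking Corollary~\ref{cor:simplex}, is a complete proof relative to the paper, and it mirrors the paper's Corollary~\ref{cor:simplexstate}.

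Your self-contained accounting in the final paragraph does not work as stated, however.
\begin{itemize}
\item The bound $n^{-(\varepsilon/2-o(1))(k-1)}$ from Lemma~\ref{lem:null} covers only $(k-1)$-cliques of $G_0$. It does not cover cliques of $G$ that use planted edges, such as $R\cup\{u\}$ with $R\subset S$, $|R|=k-2$, and $u\notin S$ adjacent to all of $R$.
\item Such cliques exist with probability of order $k^2 n 2^{-k}$. When $k\approx(2+\varepsilon)\log_2 n$ this is $\Theta(k^2 n^{-1-\varepsilon})$, which is only inverse-polynomial. So the bound $\tfrac34 n^{-c}-\eta$ is vacuous once $c>1+\varepsilon$.
\item Restricting $c$ weakens the statement. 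Detecting failed extensions does not raise the success probability either.
\end{itemize}
For $k=\omega(\log n)$, in particular for polynomial $k$, the same union bound is superpolynomially small and your accounting goes through. At $k=\Theta(\log n)$ you would need a sharper support property. One option: $\Pi_0$ only charges simplices lying in some cycle, and an isolated $R\cup\{u\}$ lies in none, because its faces containing $u$ are covered only once. The same issue sits, unaddressed, inside the paper's proof of Corollary~\ref{cor:simplex}.

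Finally, you do not need the extension vertex to be unique. Any common neighbour $u$ of a $(k-1)$-clique $T$ makes $T\cup\{u\}$ a $k$-clique, which is all Assumption~\ref{ass:SearchStrong} asks for. So the extension step never fails when $T$ is a plant facet.
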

 
\begin{proof}
By Lemma~\ref{lem:plant} the kernel is at least one-dimensional at $d=k-2$, so
$\Pi_0/\betti_{k-2}=|h_G\rangle\langle h_G|$, and would correspond to homology localization (Corollary~\ref{cor:localization}) and measuring the state would reveal the planted clique.
\end{proof}

\begin{corollary}[Low-lying spectral density]\label{cor:llsd}
Fix threshold $\lambda\in(0,k)$ for any $k$ in the hard regime for planted clique. Under Assumption~\ref{ass:pc}, no
efficient algorithm estimates the eigenvalue count
$N_{<\lambda}(\Delta_{k-2}) := \#\{\text{eigenvalues of }\Delta_{k-2}\text{ in
}[0,\lambda)\}$ to additive error $<\tfrac12$ with success probability
$\ge\tfrac12+\delta$ for a constant $\delta>0$.
\end{corollary}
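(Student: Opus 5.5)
The plan is to reduce to Theorem~\ref{thm:main} by showing that, at $d=k-2$ and for any fixed $\lambda\in(0,k)$, the count $N_{<\lambda}(\Delta_{k-2})$ coincides w.h.p.\ with $\betti_{k-2}(\skel_{k-2}X(G))$ on both components of $\mathcal{D}_k$. Once this is established, an estimator of $N_{<\lambda}$ to additive error $<\tfrac12$ with success probability $\ge\tfrac12+\delta$ is, up to an $o(1)$ loss in success probability, an estimator of $\betti_{k-2}$ with the same guarantee. The thresholding detector of Theorem~\ref{thm:main} then has advantage $\ge 2\delta-O(\eta)\ge\delta$, contradicting Assumption~\ref{ass:pc}.

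The first step handles the null case. By Lemma~\ref{lem:null}, w.h.p.\ $f_{k-2}=0$. Then $C_{k-2}$ is the zero space, $\Delta_{k-2}$ has no eigenvalues, and $N_{<\lambda}=0=\betti_{k-2}$.

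The second step, the planted case, is the main obstacle: I need the full spectrum of $\Delta_{k-2}$, not just its kernel. I will use the same w.h.p.\ structure as in Lemma~\ref{lem:plant}. Applying the first-moment bound of Lemma~\ref{lem:null} to $(k-1)$-subsets that meet the plant in at most $k-2$ vertices, I want to show that w.h.p.\ the only $(k-1)$-cliques are the $k$ facets $\tau_1,\dots,\tau_k$ of the planted simplex. These sets need $\binom{k-1}{2}-\binom{j}{2}$ random edges, where $j\le k-2$ is the overlap with the plant. The dominant term is $j=k-2$, with expected count $\lesssim k\cdot n\cdot 2^{-(k-2)}=o(1)$ in the window. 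If instead I can only get the weaker conclusion of Lemma~\ref{lem:plant}, I will restrict attention to the event where the lemma's ``w.h.p.'' structure holds.

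On this event $\partial_{k-1}$ is absent from the skeleton, so $\Delta_{k-2}=\partial_{k-2}^{\top}\partial_{k-2}$ is the $k\times k$ Gram matrix of the boundaries $\partial\tau_i$. Distinct facets $\tau_i,\tau_j$ share exactly one $(k-3)$-face, $\tau_i\cap\tau_j$. With orientations induced from $\sigma^{k-1}$, that shared face appears with opposite signs in $\partial\tau_i$ and $\partial\tau_j$, so the off-diagonal entries are all $-1$ and the diagonal entries are $k-1$. Hence $\Delta_{k-2}=kI-J$, which has eigenvalue $0$ (on the all-ones vector, matching the sphere class $z$) with multiplicity $1$ and eigenvalue $k$ with multiplicity $k-1$. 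The sign bookkeeping needs a check, but the conclusion is forced regardless: the kernel is one-dimensional (Lemma~\ref{lem:plant}), and the trace identity $k(k-1)$ together with symmetry among the facets pins the spectrum down.

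Therefore, for $\lambda\in(0,k)$, $N_{<\lambda}=1=\betti_{k-2}$ w.h.p.\ in the planted case. The gap $k$ is exactly why the statement requires $\lambda<k$. Combining the two cases, $N_{<\lambda}$ is $0$ versus $1$ w.h.p., and I will conclude by rerunning the $s_0,s_1$ computation from the proof of Theorem~\ref{thm:main} verbatim, with $N_{<\lambda}$ in place of $\betti_{k-2}$.
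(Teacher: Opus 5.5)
Your proposal is correct and follows essentially the same route as the paper. Both arguments show that $N_{<\lambda}(\Delta_{k-2})$ coincides w.h.p.\ with $\betti_{k-2}$ on both components: it is the zero space in the null case, and in the planted case $\Delta_{k-2}=kI-J$ on the plant's $k$ facets, with spectrum $\{0\}\cup\{k\}$. Both then invoke Theorem~\ref{thm:main}. Your first-moment argument that the plant facets are w.h.p.\ the only $(k-1)$-cliques makes explicit a step the paper merely asserts. Note that the $j=k-2$ term counts $\binom{k}{2}(n-k)$ sets, giving $\lesssim k^2 n\,2^{-(k-2)}=o(1)$ rather than $k n\,2^{-(k-2)}$; and for the reduction alone, $N_{<\lambda}\ge\betti_{k-2}\ge 1$ already suffices.
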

 
\begin{proof}
The spectrum of $\Delta_{k-2}$ on the planted skeleton is $\{0\}\cup \{k\}$, so for any $\lambda \in (0,k)$ the
count $N_{<\lambda}$ equals the multiplicity of $0$, namely $\betti_{k-2} = 1$; in the null case $\Delta_{k-2}$ acts on the zero space and
$N_{<\lambda} = 0$. Thus $N_{<\lambda}$ equals $\betti_{k-2}$ on both distributions, and
Theorem~\ref{thm:main} applies.
\end{proof}
 
\begin{remark}[Average-case analogue of the DQC1-hard LLSD problem]\label{rem:llsd}
Estimating the low-lying spectral density of a succinctly specified sparse
Hermitian operator is the DQC1-hard problem identified by Gyurik, Cade, and
Dunjko~\cite{GyurikCadeDunjko2022} as the computational core of quantum TDA:
LGZ-type algorithms estimate exactly this count for $\Delta_d$.
Corollary~\ref{cor:llsd} gives its \textit{average-case} analogue on the natural input
distribution.
\end{remark}
 
\begin{remark}
\label{rem:qma1_vsavg}
Deciding whether $H_d$ of a \emph{clique} complex vanishes is
\QMA$_1$-hard in the worst
case~\cite{CrichignoKohler2024,CadeCrichigno2024}. Corollary~5 is
not an average-case statement about that family. Rather,
Corollary~5 gives average-case hardness, on the distribution $\mathcal{D}_k$ of which clique homology is the
worst-case-hard subfamily. Whether homology vanishing of the clique complex itself is hard on average under any other natural distribution
remains open. 
\end{remark}

\subsection{Other implications}
\label{sec:otherimpl}
In this section we mention other implications of the findings presented in this work. 
\subsubsection{Restricted models (unconditional).}
Because the reduction is direct and the problem instance remains on the same graph, the unconditional
planted-clique lower bounds transfer to every estimation problem above with no
conjecture. Concretely: no low-degree polynomial test~\cite{KuniskyWeinBandeira2019,SchrammWein2022} and no statistical-query
algorithm~\cite{FeldmanGRVX2017} distinguishes the null and planted values of
$\betti_d$ in the PC window; and by the sum-of-squares lower bound for planted
clique~\cite{BarakHKKMP2019}, no polynomial-size, constant-degree
sum-of-squares certificate proves $H_d(\skel_d X(G))=0$ for
$G\sim\Gnp$, even though this vanishing holds w.h.p.

\subsubsection{Achieving   quantum advantage}
The ongoing search for quantum advantage in quantum Topological Data Analysis has resulted in the number of complexity results which rule out specific families of problems. In this work we show that both classical and quantumly the Betti number estimation problem as well as all the problems described in section \ref{sec:implications} are hard on average under the widely believed planted clique conjecture. We have identified a region where our reduction does not reach, leaving the average-case hardness question open. Moreover the identified region exhibits non-trivial homology, large spectral gaps, but still suffers from the state-preparation problem (P1), rendering existing quantum algorithms inefficient. However, for instances in this region if one identifies  perhaps instance specific efficient techniques for state-preparation then one might be capable of demonstrating instance- or family-specific quantum advantage.

\section{Open problems}\label{sec:open}

Our work opens up several directions for extending the
understanding of qTDA in the average-case setting. Our results also clarify what would be required to
obtain a convincing application of qTDA. A candidate family of instances to demonstrate quantum advantage should ideally satisfy the following conditions:
\begin{enumerate}[label=(\roman*),leftmargin=2em]
    \item the relevant simplices can be sampled approximately uniformly in
    polynomial time, and
    the corresponding simplex state can be prepared
    efficiently;
    \item a suitably normalized boundary operator or the combinatorial
    Laplacian can be implemented efficiently; 
    \item the smallest nonzero eigenvalue of the combinatorial Laplacian  
    satisfies
    \[
        \lambda_{\min}^{+}(\widetilde{\Delta}_d)
        \geq \frac{1}{\operatorname{poly}(n)};
    \]
    \item the required output is informative at inverse-polynomial additive
    precision.
\end{enumerate}
The most interesting spectral regime appears to be one in which the normalized
gap is inverse polynomial but tends to zero with the input size. At constant
gap, existing classical polynomial-filter algorithms may also run in
polynomial time. We conclude with a list of open problems that we hope will bring the complexity theory behind qTDA closer to practical applications.

\subsection{Extensions of the average-case hardness framework}

\begin{enumerate}[leftmargin=2em]

    \item \textbf{Average-case complexity under other natural distributions.}
    Determine the average-case complexity of Betti-number estimation under
    probability distributions other than the balanced planted-clique distribution
    considered here. 
   
    Even for pure Erd\H{o}s--R\'enyi flag complexes, it
    remains necessary to distinguish probabilistic concentration of the
    invariant from the computational complexity of estimating it at a
    specified precision.

    \item \textbf{Converse reductions and distributional completeness.}
    Determine whether planted clique can be recovered from a suitable
    Betti-number oracle in both directions, or whether Betti-number estimation
    is complete for a natural class of planted distributional problems. More
    generally, identify transformations between hidden subgraph problems and
    topological estimation problems that preserve the input distribution,
    success probability, and relevant access model.

    \item \textbf{Persistent and multiparameter extensions.}
    Extend our reduction to persistent homology and multiparameter
    persistence. A natural starting point is the skeletal filtration
    \[
        F_m(G)=\skel_m X(G).
    \]
    In the planted case, the boundary class created in
    $F_{k-2}(G)$ becomes trivial when the planted $(k-1)$-simplex is restored
    in $F_{k-1}(G)$. This suggests average-case hardness of deciding whether
    \[
        \ker\!\left(
        H_{k-2}(F_{k-2}(G))
        \longrightarrow
        H_{k-2}(F_{k-1}(G))
        \right)
    \]
    is nonzero. It would be useful to obtain analogous results for rank
    invariants, image persistence, cokernel persistence, and selected
    multiparameter Hilbert-function queries. This would complement existing
    quantum algorithms for persistent Betti numbers
    \cite{Hayakawa2022}.

    \item \textbf{Directed, path, and higher-order complexes.}
    Extend the average-case hardness results to directed flag complexes,
    path homology, hypergraph complexes, and other higher-order constructions.
    An undirected graph can be oriented according to a total vertex order,
    producing an acyclic directed graph whose directed flag complex is
    isomorphic to the original clique complex. This gives an immediate route
    to hardness results for directed flag homology. It remains open whether a
    comparable construction exists for grounded path homology and related
    directed persistence theories.

    \item \textbf{Average-case spectral and sampling parameters.}
    Determine the typical scaling of
    \[
        \frac{f_d}{\binom{n}{d+1}},
        \qquad
        \lambda_{\min}^{+}(\widetilde{\Delta}_d),
        \qquad
        \frac{\beta_d}{f_d},
    \]
    under natural probability distributions. The analysis should distinguish
    membership access, local sparse-row access, approximate simplex sampling,
    coherent state preparation, and block encoding. Uniform lower bounds on the
    normalized Hodge gap for growing dimension would be particularly valuable.

    \item \textbf{Succinct boundary-matrix complexity.}
    Formalize the complexity of rank, nullity, and low-energy spectral
    estimation for exponentially large boundary matrices given through a
    compact graph or hypergraph description. The resulting problem is
    different from rank computation for an explicitly listed sparse matrix,
    which is polynomial-time solvable. It would be useful to identify complete
    problems for natural succinct-access models and to determine which
    classical restricted-model lower bounds transfer to them.

\end{enumerate}

\subsection{Biological applications}

\begin{enumerate}[leftmargin=2em]

    \item \textbf{Online and zigzag qTDA for evolving tissues.}
    Develop quantum algorithms for time-dependent sequences
    \[
        K_0 \longleftrightarrow K_1
        \longleftrightarrow \cdots
        \longleftrightarrow K_T,
    \]
    in which simplices can both appear and disappear. An important question is
    whether a simplex state or harmonic state prepared at time $t$ can be used
    as a warm start at time $t+1$, leading to an amortized advantage over
    repeated classical matrix reduction. This has a number of practical biological applications which include tumour evolution, immune-cell migration, and vascular remodelling.

    \item \textbf{Directed topology of biological flow and signalling.}
    Develop qTDA methods for directed biological networks, including
    signalling pathways, gene-regulatory networks, metabolic networks,
    vascular flow, neuronal circuits, and polarized molecular interactions.

    \item \textbf{Higher-order topology in multi-omics and cellular
    interactions.}
    Construct biologically motivated simplicial or hypergraph models in which
    a higher-dimensional simplex represents a genuine joint interaction
    rather than only the clique completion of pairwise edges. Examples include multi-cell signalling events, protein complexes,
    gene-regulatory modules, microbiome communities and beyond. One should determine whether higher-order topological features improve prediction beyond graph-based models and whether the corresponding simplices can be sampled without enumerating the
    full complex.

\end{enumerate}

\subsection{Quantum advantage}

\begin{enumerate}[leftmargin=2em]

    \item \textbf{Common access models for classical and quantum algorithms.}
    Compare quantum algorithms with the corresponding classical algorithms under
    the same input assumptions. The comparison should include exact sparse
    boundary reduction, path-integral estimators, Krylov and Chebyshev methods,
    stochastic Lanczos quadrature, and application-specific sampling
    algorithms \cite{Apers2022,Berry2024}.
    
    \item \textbf{Amortized advantage.}
    Many applications rely on performing the same topological computation over
    many scales, time points, parameter values, candidate molecules, or
    bootstrap samples. Determine whether quantum states, block encodings, or
    spectral estimates can be reused across related instances. An amortized
    advantage over a long sequence of correlated queries may be achievable
    even when no advantage is present for a single isolated complex.

\end{enumerate}

\subsection{Cryptography}

\begin{enumerate}[leftmargin=2em]

    \item \textbf{Betti-number estimation as a proof-of-work problem.}
    Determine whether the average-case hardness of Betti-number estimation can
    be combined with an efficient verification mechanism. 
    
 \item \textbf{Cryptographic primitives based on topological problems.}
    Determine whether average-case hardness of Betti-number estimation,
    homology vanishing, homology localization, or related succinctly
    represented linear-algebra problems can be used to construct standard
    cryptographic primitives such as one-way functions.

\end{enumerate}

\section*{Acknowledgements}
Strelchuk acknowledges support from the Wellcome Leap as part of the Q4Bio Program and the Royal Society University Research Fellowship. Subramanian acknowledges support from the Royal Society through a University Research Fellowship.


\bibliographystyle{alpha}
\bibliography{ref}

\end{document}